\renewcommand{\v}[1]{\bm{#1}}
\newcommand{\E}{\mathbb{E}}
\newcommand{\eq}[1]{equation (\ref{eq:#1})}
\newcommand{\noeq}[1]{(\ref{eq:#1})}
\DeclareMathOperator{\diag}{diag}
\newcommand{\norm}[1]{\left\lVert#1\right\rVert_{1}}
\renewcommand{\d}[1]{\,d#1}
\newtheorem{assumption}{Assumption}
\newtheorem{theorem}{Theorem}
\newtheorem{corollary}{Corollary}
\newtheorem{definition}{Definition}
\newtheorem{proposition}{Proposition}
\newtheorem{hypothesis}{Hypothesis}
\begin{document}

\title{Systemic Greeks: Measuring risk in financial networks}
\author[1,2]{Nils Bertschinger \thanks{bertschinger@fias.uni-frankfurt.de}}
\author[1]{Julian Stobbe \thanks{jstobbe@fias.uni-frankfurt.de}}
\affil[1]{Department of Systemic Risk, Frankfurt Institute for Advanced Studies, Frankfurt am Main, Germany}
\affil[2]{Department of Computer Science, Goethe University, Frankfurt am Main, Germany}

\maketitle

\begin{abstract}
  Since the latest financial crisis, the idea of systemic risk has
  received considerable interest. In particular, contagion effects
  arising from cross-holdings between interconnected financial firms
  have been studied extensively. Drawing inspiration from the field of
  complex networks, these attempts are largely unaware of models and
  theories for credit risk of individual firms. Here, we note that
  recent network valuation models extend the seminal structural risk
  model of \cite{Merton1974}. Furthermore, we formally compute
  sensitivities to various risk factors -- commonly known as Greeks --
  in a network context. In particular, we propose the network $\Delta$
  as a quantitative measure of systemic risk and illustrate our
  findings on some numerical examples.

  {\bf Keywords:} contingent claims analysis, financial contagion, Merton
  model, network valuation, risk-neutral pricing, systemic
  risk, risk management, Greeks
\end{abstract}

\section{Introduction}

Since the latest financial crisis, the idea of systemic risk has
received considerable interest. Systemic risk denotes the risk
that large parts of the entire financial system, e.g. important
markets for credit or liquidity, collapse. Generally, two approaches
can be distinguished. On one hand, quantitative measures of
systemic risk have been proposed
\citep{Acharya2010,Adrian2016,Brownlees2016} and estimated on market
data. These measures strive to capture the statistical phenomena
associated with large market disruptions, e.g. tail-correlations or
conditional shortfalls. As such, they are estimated from market data
without considering fundamentals about firm's portfolio
compositions. On the other hand, complex network models start from the
structure of cross-holdings. Based on simple yet plausible
assumptions about insolvency resolution the impact of individual
defaults on the entire financial system is studied
\citep{Eisenberg2001,Gai2010,Upper2011,Battiston2012,Barucca2016}. As
such, these models are mainly investigated by analytic and numeric
means on random networks loosely resembling real financial
structures. Due to their stylized nature they are mostly of
illustrative nature, yet providing important insights into the
dynamics of financial contagion, e.g. on the appearance of a contagion
window and the robust-yet-fragile nature of financial systems.

\subsection{Network valuations}

The above network models analyze and simulate the contagion arising
from defaults/insolvencies of direct counterparties. To this end, they
focus on the cash flows of repaid liabilities -- in full or in part --
when contracts are settled. This idea is most vividly expressed in the
work by \cite{Eisenberg2001} as the ``clearing payment vector''. An
alternative viewpoint underlies the work by \cite{Suzuki2002} who
considers a network of firms with cross-holdings of debt and
equity. Repayment of liabilities, i.e. debt cross-holdings,
cannot be considered from a cash flow perspective alone, as the value
of equity cross-holdings needs to be known as well. The model
therefore extends the seminal \cite{Merton1974} model which expresses
debt and equity values as derivative contracts on a single firms
assets to multiple firms with cross-holdings. From this perspective
instead of a consistent set of clearing payments, all contracts have
to be valued consistently. Interestingly, this strand of research has
developed mostly unaware of clearing models, even re-discovering
Suzuki's results several times \citep{Elsinger2009,Fischer2014}. Only
recently \cite{Barucca2016} have shown that the idea of network
valuation actually unifies many models, including the ones by
\cite{Eisenberg2001} and \cite{Battiston2012}, under a common
framework. Here, we build on this framework to investigate risk in
financial networks.

\subsection{Greeks}

In the context of a single firm, \cite{Merton1974} has shown that
equity can be considered as a long call option and debt insurance as a
short put option on the firm's asset value. Thus connecting credit
risk with option pricing. Such {\em structural credit risk} models
have since been developed and used extensively to assess and
manage credit risk. Especially Moody's KMV model has become an industry
standard in this respect and is routinely used to recover implied
probabilities of default from market price data.

Furthermore, option pricing provides a wide range of risk measures and
management techniques. The Greeks
\citep{Haug2003a,Haug2003b} are widely used to quantify sensitivities
to several risk factors, e.g. interest rate, volatility and many
more. Consequently, they are routinely used in practice to evaluate
and hedge the risk of option portfolios. Delta hedging provides a
well-known technique to reduce portfolio risk and builds on the
option's $\Delta$, measuring the sensitivity of the option price to
changes in the spot price of the underlying. Its importance is
reflected in the fact that options are routinely quoted in terms of
their $\Delta$. As the other Greeks, $\Delta$ is formally defined as a
partial derivative.

In the context of network valuations, partial derivatives have been
used a few times to quantify sensitivities, be it to asset values
\citep{Liu2010,Demange2018} or changes in the value of cross-holdings
\citep{Feinstein2017,Karl2015}. Here, we extend this work in several
ways. First we make use of the implicit function theorem to derive
the partial derivatives of network consistent contract
values. Interestingly, to our knowledge this seems to be unknown to
much of the community -- a notable exception being the thesis of
\cite{Karl2015}. Second, we do not consider the ex-post values,
i.e. at the time of contract maturity/settlement, but instead consider
the ex-ante market values under the risk-neutral measure. In turn, we
compute several first-order Greeks and investigate their behavior in
several examples. In particular, we propose the network $\Delta$ as a
principled, quantitative measure of systemic risk.

Our paper is structured as follows. Section \ref{sec:Model} introduces
the mathematical notation and network valuation model. In section
\ref{sec:Greeks} we explain risk-neutral pricing and derive a formal
solution for the network Greeks. We illustrate our results on some
examples in section \ref{sec:illu}. Finally, we discuss implications
of our model and provide an outlook on future extensions in section
\ref{sec:discuss}. Major proofs and computations are collected in the
appendices. There we also relate our results to previous studies,
namely the {\em threat index} of \cite{Demange2018} and the local
valuation framework of \cite{Barucca2016}.

\section{Model}
\label{sec:Model}

\subsection{Notation and mathematical preliminaries}

Here we quickly summarize the mathematical notation employed in this
paper. We write vectors $\v{x}, \v{y} \in \mathbb{R}^n$ with bold
lower case and matrices $\v{A}, \v{B} \in \mathbb{R}^{m \times n}$
with bold upper case letters. Individual entries of vectors and
matrices are written as $x_i, A_{ij}$. $\diag(\v{x})$ denotes the
$n \times n$ diagonal matrix $D$ with entries $D_{ii} = x_i$ along its
diagonal. The transpose of a matrix is denoted as $\v{A}^T$. All
products containing vectors and matrices are understood as standard
matrix products, e.g. $\v{A} \v{B}$ denotes the matrix product of
$\v{A}$ and $\v{B}$, $\v{x} \v{x}$ is undefined whereas
$\v{x}^T \v{x}$ is the scalar product of $\v{x}$ with itself. Row- and
column-wise stacking of vectors or matrices is denoted by
$(\v{x}; \v{y})$ and $(\v{x}, \v{y})$ respectively, i.e.
$(\v{x}; \v{y})$ is a $2 n$-dimensional vectors whereas
$(\v{x}, \v{y})$ is a $n \times 2$ matrix.

Random variables $X, Y$ are written as upper case letters with
individual outcomes $x, y$ denoted in lower case. Expectations are
denoted as $\E[f(X)]$ and understood with respect to the (joint)
distribution of random variables within the brackets. Sometimes we
use $\E_t^Q$ to denote that the expectation is taken over the
risk-neutral measure $Q$, implicitly conditioned on the information
filtration upto time $t$.

\subsection{Network valuation}

\cite{Merton1974} has shown that equity and firm debt can be
considered as call and put options on the firm's value
respectively. In this model, a single firm is holding externally
priced assets $a$ and zero-coupon debt with nominal amount $d$ due at
a single, fixed maturity $T$. Then, at time $T$ the value of equity
$s$ and the recovery value of debt $r$ are given as
\begin{align}
  \label{eq:Merton}
  s &= \max\{0, a - d\} = (a - d)^{+}, \\
  r &= \min\{d, a\} = d - (d - a)^{+}
\end{align}
corresponding to an implicit call and put option respectively.

\cite{Suzuki2002} and others \citep{Elsinger2009,Fischer2014} have since
generalized this model to multiple firms with equity and debt
cross-holdings. In this paper we consider $n$ firms. Each firm
$i = 1, \ldots, n$ holds an external asset $a_i > 0$ as well as a fraction
$M_{ij}^s$ of firm $j$'s equity and debt $M_{ij}^d$. Here, the
investment fractions $M_{ij}^s$ and $M_{ij}^d$ are bounded between $0$
and $1$, i.e. $0 \leq M_{ij}^{s,d} \leq 1$, and the actual value
invested in the equity of counterparty $j$ is given as $M_{ij}^s
s_j$. In the following we require:
\begin{assumption}
  \label{assu_1}
  There are no self-holdings, i.e. $M^s_{ii} = M^d_{ii} = 0$ for all
  $i = 1, \ldots, n$, nor short positions, i.e.
  $M^s_{ij}, M^d_{ij} \geq 0$ for all $i, j = 1, \ldots, n$. Moreover,
  the fraction of equity and debt held by any counterparty cannot
  exceed unity, i.e. for all $j = 1, \ldots, n$ we require that
  \begin{align}
    \label{eq:assu_1a}
    \sum_i M_{ij}^s \leq 1 \quad \mbox{and} \quad \sum_i M_{ij}^d \leq 1 \, .
  \end{align}
  Furthermore, some equity and debt are held externally, i.e. there
  exist firms $j_s$ and $j_d$ such that
  \begin{align}
    \label{eq:assu_1b}
    \sum_i M_{ij_s}^s < 1 \quad \mbox{and} \quad \sum_i M_{ij_d}^d < 1 \, .
  \end{align}  
\end{assumption}
That is, $\v{M}^s$ and $\v{M}^d$ are strictly (left) sub-stochastic
matrices.

Now, the value of all assets $v_i$ held by firm $i$ is given by
\begin{align}
  \label{eq:XOS_val_i}
  v_i &= a_i + \sum_{j=1}^n M_{ij}^s s_j + \sum_{j=1}^n M_{ij}^d r_j \, .
\end{align}
Correspondingly, the firm's equity and recovery value of debt are given by
\begin{align}
  \label{eq:XOS_sr_i}
  s_i &= \max\left\{0, a_i + \sum_j M^s_{ij} s_j + \sum_j M^d_{ij} r_j - d_i \right\}, \\
  r_i &= \min\left\{d_i, a_i + \sum_j M^s_{ij} s_j + M^d_{ij} r_j \right\} \, .
\end{align}
In matrix notation, i.e. collecting equity and debt values into
vectors $\v{s} = (s_1, \ldots, s_n)^T$ and
$\v{r} = (r_1, \ldots, r_n)^T$ respectively, this can be rewritten as
\begin{align}
  \label{eq:XOS_vec}
  \v{s} &= \max\left\{\v{0}, \v{a} + \v{M}^s \v{s} + \v{M}^d \v{r} - \v{d} \right\}, \\
  \v{r} &= \min\left\{\v{d}, \v{a} + \v{M}^s \v{s} + \v{M}^d \v{r} \right\}
\end{align}
Thus, the firms' equity and debt values are endogenously defined as
the solution of a fixed point. This is readily seen when collecting
equity and debt row-wise into a single vector $\v{x} = (\v{s}; \v{r})$,
i.e. $\v{s} = \v{x}_{1:n}$ and $\v{r} = \v{x}_{(n+1):2n}$, and writing
\begin{align}
  \label{eq:XOS_fix}
  \v{x} = \v{g}(\v{a}, \v{x})
\end{align}
with the vector valued function
$\v{g} = (g^s_1, \ldots, g^s_n, g^r_1, \ldots, g_n^r)^T$ where for
$i = 1, \ldots, n$
\begin{align}
  \label{eq:XOS_g}
  g^s_i(\v{a}, \v{x}) &= \max\left\{0, a_i + \sum_j M^s_{ij} x_{j} + \sum_j M^d_{ij} x_{n + j} - d_i \right\}, \\
  g^r_i(\v{a}, \v{x}) &= \min\left\{d_i, a_i + \sum_j M^s_{ij} x_{j} + \sum_j M^d_{ij} x_{n + j} \right\} \, . \\
\end{align}
Each of the functions $g_i^s$ and $g_i^r$ is continuous and increasing
in $\v{a}$ and $\v{x}$. Together with assumption \ref{assu_1} it
follows that the fixed point of \eq{XOS_fix} is positive and unique.
\begin{theorem}
  Suppose that assumption \ref{assu_1} holds. Then, for each value of
  external assets $\v{a} > \v{0}$ there is a positive and unique
  $\v{x}$ solving \eq{XOS_fix}.
\end{theorem}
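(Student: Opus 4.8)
The plan is to exploit that the fixed-point map $\v g(\v a, \cdot)$ is continuous and coordinatewise increasing, while $\v M^s$ and $\v M^d$ are strictly sub-stochastic. Existence will come from a monotone iteration on a compact order interval, and uniqueness from a value-conservation identity that pins the least and greatest fixed points to the same vector.

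For existence I would first exhibit an order interval $[\v 0, \v u] \subset \mathbb{R}^{2n}$ that $\v g(\v a, \cdot)$ maps into itself. The debt coordinates are automatically bounded, since $g^r_i \le d_i$. For the equity coordinates, any solution satisfies $\v s \le \v a + \v M^s \v s + \v M^d \v r \le \v a + \v M^s \v s + \v M^d \v d$ because $\v r \le \v d$, hence $(\v I - \v M^s) \v s \le \v a + \v M^d \v d$; as $\rho(\v M^s) < 1$ the matrix $(\v I - \v M^s)^{-1} = \sum_{k \ge 0} (\v M^s)^k$ exists and is entrywise nonnegative, so multiplying through gives $\v s \le \bar{\v s} := (\v I - \v M^s)^{-1}(\v a + \v M^d \v d)$. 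Put $\v u = (\bar{\v s}; \v d)$. One checks directly, using $(\v I - \v M^s)\bar{\v s} = \v a + \v M^d \v d$, that $\v g(\v a, \v u) \le \v u$ and (here $\v a > \v 0$ is used) that $\v g(\v a, \v 0) \ge \v 0$, and also that every solution of \eq{XOS_fix} is nonnegative and lies below $\v u$. Since $\v g(\v a, \cdot)$ is monotone and continuous, the iterates $\v x^{(k+1)} = \v g(\v a, \v x^{(k)})$ started from $\v x^{(0)} = \v u$ decrease and stay above $\v 0$, so they converge, and their limit $\bar{\v x}$ is a fixed point by continuity; starting from $\v 0$ gives a fixed point $\underline{\v x}$, and every solution is sandwiched in $[\underline{\v x}, \bar{\v x}]$ (equivalently, invoke Tarski's fixed-point theorem). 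Positivity of the solution is already contained in the identity $r_i = \min\{d_i, a_i + (\text{nonnegative})\}$ together with $a_i > 0$, while $s_i \ge 0$ by construction.

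For uniqueness, the key remark is that $\max\{0, w - d\} + \min\{d, w\} = w$ for all $w \ge 0$, so at any solution, adding the equity and recovery equations of firm $i$ collapses the threshold nonlinearity and yields $s_i + r_i = a_i + \sum_j M^s_{ij} s_j + \sum_j M^d_{ij} r_j$, i.e.\ $(\v I - \v M^s)\v s + (\v I - \v M^d)\v r = \v a$. Subtracting this identity for $\bar{\v x}$ and $\underline{\v x}$, the nonnegative gaps $\v{\delta s} = \bar{\v s} - \underline{\v s}$, $\v{\delta r} = \bar{\v r} - \underline{\v r}$ satisfy $\v{\delta s} + \v{\delta r} = \v M^s \v{\delta s} + \v M^d \v{\delta r}$. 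Pre-multiplying by $\v 1^T$ and using that every column of $\v M^s$ and of $\v M^d$ sums to at most one forces equality throughout, hence $\delta s_j = 0$ whenever column $j$ of $\v M^s$ has sum strictly below one and $\delta r_j = 0$ whenever column $j$ of $\v M^d$ does; substituting back and propagating this along the ownership graph — which is exactly where \eq{assu_1b} combined with \eq{assu_1a} is needed, guaranteeing $\v I - \v M^s$ and $\v I - \v M^d$ are nonsingular — gives $\v{\delta s} = \v{\delta r} = \v 0$, so $\underline{\v x} = \bar{\v x}$ and the solution is unique. The main obstacle is precisely this last step: $\v g(\v a, \cdot)$ is not a contraction in any obvious norm, so uniqueness cannot be read off from a Lipschitz estimate, and one must instead use the conservation identity together with strict sub-stochasticity to rule out a non-degenerate lattice of fixed points.
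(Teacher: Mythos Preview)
Your approach is genuinely different from the paper's. The paper gives no direct argument: it observes that the model is a special case of \cite{Fischer2014} (single seniority class, constant nominal debt), checks that Fischer's assumption~3.1 follows from assumption~\ref{assu_1}, and invokes Fischer's theorem~3.8(iv). By contrast you supply a self-contained proof: existence via Tarski/monotone iteration on the order interval $[\v 0, ((\v I - \v M^s)^{-1}(\v a + \v M^d \v d);\, \v d)]$, and uniqueness via the value-conservation identity $s_i + r_i = a_i + (\v M^s \v s)_i + (\v M^d \v r)_i$, which kills the threshold nonlinearity and lets you compare the greatest and least fixed points directly. Your route exposes precisely where sub-stochasticity enters and is far more informative; the paper's route is shorter and inherits Fischer's greater generality (multiple seniority classes, state-dependent liabilities).

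One caveat: your claims that $\rho(\v M^s) < 1$ and that $\v I - \v M^{s}$, $\v I - \v M^d$ are nonsingular do not follow from assumption~\ref{assu_1} as literally written, which only asks that \emph{some} column of each matrix have sum strictly below one. A $2\times 2$ full-ownership equity cycle embedded in a larger network satisfies \noeq{assu_1a}--\noeq{assu_1b} yet has $\rho(\v M^s)=1$, and then no finite equity solution exists at all --- so your ``propagation along the ownership graph'' cannot be completed in that generality. Under the stronger reading that \emph{every} column sum is strictly below one --- exactly the strengthening the paper itself adopts in appendix~\ref{app:hyp} --- your $\v 1^T$-argument already forces $\delta s_j = \delta r_j = 0$ for all $j$ without any propagation. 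So your sketch is sound under the reading the paper actually uses; just be explicit that the literal assumption~\ref{assu_1} is too weak for the conclusion.
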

\begin{proof}
  Our model is a special case of the one considered by
  \cite{Fischer2014} with $k = 1$ and $\v{d}^1_{\v{r}^1,\v{r}^0}
  \equiv \v{d}$. Furthermore, Fischer's assumption 3.1 holds by
  assumption \ref{assu_1} and assumptions 3.6 and 3.7 are trivial as
  our nominal debt vector $\v{d}$ is constant. The result then follows
  by his theorem 3.8 (iv). 
\end{proof}

It is interesting to consider the model from the perspective of an
external investor. Whereas the internal value of all firms is given by
$\v{v} = \v{s} + \v{r}$, stock and debt holdings are diluted by
cross-holdings.
Considering firm $i$, a fraction $\sum_j M_{ji}^s$ of its stocks is
held by other firms $j$.  Thus only a fraction $1 - \sum_j M_{ji}^s$
is available to outside investors and similarly for debt.
\begin{definition}
  The value $v^{\mathrm{out}}_i$ of firm $i$ available to outside
  investors is defined as
  \begin{align}
    \label{eq:v_ext}
    v_i^{\mathrm{out}} &= (1 - \sum_j M_{ji}^s) s_i + (1 - \sum_j M_{ji}^d) r_i \, .
  \end{align}  
\end{definition}
While the internal value exceeds the value of external assets, i.e.
\begin{align}
  \label{eq:val}
    \v{v} &= \v{s} + \v{r} \\
    &= \max\left\{\v{0}, \v{a} + \v{M}^s \v{s} + \v{M}^d \v{r} - \v{d} \right\} + \min\left\{\v{d}, \v{a} + \v{M}^s \v{s} + \v{M}^d \v{r} \right\} \\
    &= \v{a} + \v{M}^s \v{s} + \v{M}^d \v{r} > \v{a} ,
\end{align}
this is not the case for external investors.

\begin{proposition}
  \label{prop:out}
  The total value available to outside investors equals the total
  value of external assets, i.e.
  $\sum_{i=1}^n v_{i}^{\mathrm{out}} = \sum_{i=1}^n a_i$.
\end{proposition}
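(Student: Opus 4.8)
The plan is to sum the defining identity \noeq{v_ext} over all firms, rearrange the resulting double sums, and then invoke the internal-value identity already recorded in \noeq{val}. Write $\v{1} = (1,\dots,1)^T$ for the all-ones vector in $\mathbb{R}^n$. Summing $v_i^{\mathrm{out}} = (1 - \sum_j M_{ji}^s)\,s_i + (1 - \sum_j M_{ji}^d)\,r_i$ over $i = 1,\dots,n$ gives
\[
  \sum_{i=1}^n v_i^{\mathrm{out}} = \v{1}^T\v{s} + \v{1}^T\v{r} - \sum_{i=1}^n\Big(\sum_{j=1}^n M_{ji}^s\Big) s_i - \sum_{i=1}^n\Big(\sum_{j=1}^n M_{ji}^d\Big) r_i .
\]
The one computation to do carefully is the swap of summation order in the two double sums: $\sum_i \sum_j M_{ji}^s s_i = \sum_j \sum_i M_{ji}^s s_i = \sum_j (\v{M}^s\v{s})_j = \v{1}^T\v{M}^s\v{s}$, and likewise $\sum_i\sum_j M_{ji}^d r_i = \v{1}^T\v{M}^d\v{r}$. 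Hence the right-hand side collapses to $\v{1}^T\bigl(\v{s} + \v{r} - \v{M}^s\v{s} - \v{M}^d\v{r}\bigr)$.

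For the second and final step I would appeal to the chain of equalities in \noeq{val}: since the fixed-point value $v_i = a_i + (\v{M}^s\v{s})_i + (\v{M}^d\v{r})_i$ is nonnegative — indeed $v_i \geq a_i > 0$ because $\v{a} > \v{0}$, $\v{M}^s,\v{M}^d \geq \v{0}$, and the fixed point $(\v{s};\v{r})$ is positive by the preceding theorem — we have $\max\{0, v_i - d_i\} + \min\{d_i, v_i\} = v_i$, i.e. $\v{s} + \v{r} = \v{a} + \v{M}^s\v{s} + \v{M}^d\v{r}$ and therefore $\v{s} + \v{r} - \v{M}^s\v{s} - \v{M}^d\v{r} = \v{a}$. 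Substituting this into the expression above yields $\sum_{i=1}^n v_i^{\mathrm{out}} = \v{1}^T\v{a} = \sum_{i=1}^n a_i$, which is the assertion.

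There is no serious obstacle here: the argument is a one-line bookkeeping identity once \noeq{val} is in hand. The only place to be attentive is to not conflate the row sums $\sum_i M_{ij}^{s,d}$ (the quantities constrained by Assumption \ref{assu_1} and \noeq{assu_1a}) with the column-type sums $\sum_j M_{ji}^{s,d}$ appearing in the dilution factors, and to keep the transposes straight when moving to matrix notation. It is worth noting that sub-stochasticity itself is not used in the identity — only nonnegativity of $\v{M}^s, \v{M}^d$ and existence of the positive fixed point are needed — although \noeq{assu_1a} is what ensures each dilution factor $1 - \sum_j M_{ji}^{s,d}$ is nonnegative, so that $v_i^{\mathrm{out}}$ is a meaningful (nonnegative) quantity. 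The proposition thus expresses a conservation law: cross-holdings merely redistribute claims among the firms, so the aggregate value accruing to outside investors equals the aggregate of the only exogenous resource in the system, the external assets $\v{a}$.
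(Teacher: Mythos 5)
Your proof is correct and follows essentially the same route as the paper's: sum the definition of $v_i^{\mathrm{out}}$, swap the order of summation in the double sums, and cancel against the identity $\v{s}+\v{r}=\v{a}+\v{M}^s\v{s}+\v{M}^d\v{r}$ from \noeq{val}. (Your aside about positivity of $v_i$ is not actually needed, since $\max\{0,x-d\}+\min\{d,x\}=x$ holds identically, but this does not affect correctness.)
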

\begin{proof}
  From \eq{val} the total value of all firms is given as
\begin{align}
  \sum_i v_i &= \sum_i \left( s_i + r_i \right) = \sum_i \left( a_i + \sum_j M_{ij}^s s_j + \sum_j M_{ij}^d r_j \right)
\end{align}
Outside investors hold a total value of
\begin{align}
  \sum_i v_i^{\mathrm{out}}
  &= \sum_i (1 - \sum_j M_{ji}^s) s_i + \sum_i (1 - \sum_j M_{ji}^d) r_i \\
  &= \sum_i \left( s_i + r_i \right) - \sum_i \sum_j M_{ji}^s s_i - \sum_i \sum_j M_{ji}^d r_i \\
  &= \sum_i \left( a_i + \sum_j M_{ij}^s s_j + \sum_j M_{ij}^d r_j \right) - \sum_j \sum_i M_{ij}^s s_j - \sum_j \sum_i M_{ij}^d r_j \\
  &= \sum_i a_i
\end{align}
\end{proof}
This point, that asset values are inflated by cross-holdings compared
to the actual underlying external values accessible by investors, has
also been noted by \cite{Fischer2014}. In general,
$v_i \neq v_i^{\mathrm{out}}$ and thus, it can be beneficial or
detremential for an outside investor, if firm $i$ enters into
cross-holding agreements with counterparties. Indeed, as external
asset value is conserved in the model, any change in the structure of
cross-holdings gives rise to a zero-sum redistribution between
external investors.

\section{Network Greeks}
\label{sec:Greeks}

How is risk distributed in and by the network of cross-holdings? Risk
in financial derivatives is routinely measured and management using
the Greeks or risk sensitivities \cite{Haug2003a,Haug2003b}. In this section, we
build on the above model which considers the value of network claims
as an extension of Merton's model. Accordingly, firm equity and debt
values under cross-holdings are derivative contracts and can be priced
and managed as such. In particular, we compute ``network Greeks''
to investigate systemic risk arising from interconnected financial
firms. To our knowledge this idea has not been investigated before.

\subsection{Risk-neutral valuation}

Denoting the unique solution of \eq{XOS_fix} by $\v{x}^*(\v{a})$, we
can consider the corresponding value of equity and debt claims as
derivative contracts on the underlying $\v{a}$. Accordingly, the
ex-ante market price at time $t < T$ is given as
\begin{align}
  \label{eq:val_Q}
  \v{x}_t = \E_t^Q[e^{-r \tau} \v{x}^*(\v{A}_T)]
\end{align}
with the riskless interest rate $r$ and time to maturity
$\tau = T - t$. The expectation is taken with respect to the
risk-neutral measure $Q$ of external asset values $\v{a}$ at maturity
$T$. In the following, we assume that the risk-neutral asset values
follow a multi-variate geometric Brownian motion, i.e.
\begin{align}
  \label{eq:SDE}
  dA^i_t &= r A^i_t \d{t} + \sigma_i A^i_t \d{W^i_t}
\end{align}
with possibly correlated Wiener processes $W^i_t$, i.e.
$\E[\d{W^i_t} \d{W^j_t}] = \rho_{ij} \d{t}$ with $\rho_{i,i} = 1$. It is
well-known that the solution $\v{A}_t$ is given by
\begin{align}
  \label{eq:p_A_t}
  A^i_t &= a^i_0 e^{\left( r - \frac{1}{2} \sigma_i^2 \right) t + \sigma_i W^i_t}
\end{align}
where $a^i_0 > 0$ denotes the initial value and $\v{W}_t$ is
multivariate normal distributed with mean $\v{0}$ and covariance
matrix $t \v{C}$ with entries $C_{ij} = \rho_{ij}$. Note that $\v{W}_t$ can
be obtained from independent standard normal variates
$\v{Z} \sim \mathcal{N}(\v{0}, \v{I}_{n \times n})$ as $\v{W}_t = \sqrt{t} \v{L} \v{Z}$ with
$\v{L}^T \v{L} = \v{C}$. We will use this
representation in the next section to express the risk-neutral market value of equity and
debt contracts as
\begin{align}
  \label{eq:val_Q}
  \v{x}_t = \E_t^Q[e^{-r \tau} \v{x}^*\left(\v{a}_{T}(\v{Z})\right)] 
  = \E_t^Q\left[e^{-r \tau} \v{x}^*\left(\v{a}_t e^{\left( r - \frac{1}{2} \v{\sigma}^2 \right) \tau + \sqrt{\tau} \diag(\v{\sigma}) \v{L} \v{Z}}\right)\right] \, .
\end{align}

\subsection{Formal solution}

The Greeks quantify the sensitivities of derivative prices to changes
in underlying parameters. As a starting point, in this paper, we
consider first-order Greeks only. In particular, we investigate the
sensitivity of equity and debt prices accounting for cross-holdings
with respect to current asset values
$\Delta = \frac{\partial \v{x}_t}{\partial \v{a}_t}$, volatilities
$\mathcal{V} = \frac{\partial \v{x}_t}{\partial \v{\sigma}}$,
risk-free interest rate $\rho = \frac{\partial \v{x}_t}{\partial r}$
and time to maturity $\Theta = - \frac{\partial \v{x}_t}{\partial \tau}$.

Denoting all parameters of interest by
$\v{\theta} = (\v{a}_t, \v{\sigma}, r, \tau)^T$ and considering that the
asset value $\v{a}_{\tau}(Z; \theta)$ depends on the random variate
$Z$ and these parameters, we need to compute the following derivatives
\begin{align}
  \label{eq:greek_Q}
  \frac{\partial}{\partial \v{\theta}} \v{x}_t &= \frac{\partial}{\partial \v{\theta}} \E_t^Q[ e^{- r \tau} \v{x}^*(\v{a}_{T}(Z; \theta)) ] \\
  &= \E_t^Q\left[ \left( \frac{\partial}{\partial \v{\theta}} e^{- r \tau} \right) \v{x}^*(\v{a}_{T}(Z; \theta))
    + e^{- r \tau} \left(\frac{\partial}{\partial \v{\theta}} \v{x}^*(\v{a}_{T}(Z; \theta)) \right) \right]
\end{align}
where we have used Leibniz's rule to exchange the order of expectation and
differentiation.

By the chain rule of differentiation we obtain
\begin{align}
  \label{eq:greek_chain}
  \frac{\partial}{\partial \v{\theta}} \v{x}^*(\v{a}_{\tau}(Z; \theta)) &= \frac{\partial}{\partial \v{a}} \v{x}^*(\v{a}) \mid_{\v{a} = \v{a}_{\tau}(Z; \theta)} \frac{\partial}{\partial \v{\theta}} \v{a}_{\tau}(Z; \theta) \, .
\end{align}
Note that $\frac{\partial}{\partial \v{a}} \v{x}^*(\v{a})$ is the
derivative of the fixed point solving \eq{XOS_fix}. In order to
compute it, we make use of the {\em implicit function theorem}.
A version of the theorem by \cite{Halkin1974} is adopted to our notation:
\begin{theorem}
  \label{thm:impl_fun}
  Let $U \subset \mathbb{R}^m, V \subset \mathbb{R}^n$
  and $\v{f}: U \times V \to \mathbb{R}^n$ a continuously
  differentiable function. Suppose that
  \begin{align}
    \label{eq:impl_fix}
    \v{f}(\v{x}^*, \v{y}^*) = \v{0}
  \end{align}
  at a point $(\v{x}^*, \v{y}^*) \in U \times V$ and that the Jacobian
  matrices
  $\v{J}_{\v{f}, \v{x}} \v{f}(\v{x}, \v{y}), \v{J}_{\v{f}, \v{y}}
  \v{f}(\v{x}, \v{y})$
  of partial derivatives exist at $(\v{x}^*, \v{y}^*)$. Further, $\v{J}_{\v{f}, \v{y}}$ is invertible at
  this point. Then, there exists a neighbourhood
  $U^* \subset U$ and a 
  continuously differentiable function $\v{h}: U^* \to \mathbb{R}^n$
  with
  \begin{align}
    \label{eq:impl_sol}
    \v{h}(\v{x}^*) &= \v{y}^*
  \end{align}
  and
  \begin{align}
    \label{eq:impl_sol_fix}
    \v{f}(\v{x}, \v{h}(\v{x})) = \v{0} \quad \forall \v{x} \in U^* \, .
  \end{align}
  Moreover, the partial derivatives of $\v{h}$ with respect to
  $\v{x} \in U^*$ are given as
  \begin{align}
    \label{eq:impl_sol_deriv}
    \frac{\partial}{\partial \v{x}} \v{h}(\v{x}) &= - \left[ \v{J}_{\v{f}, \v{y}} \v{f}(\v{x}, \v{h}(\v{x})) \right]_{n \times n}^{-1}
                                                   \left[ \frac{\partial}{\partial \v{x}} \v{f}(\v{x}, \v{h}(\v{x})) \right]_{n \times m}
  \end{align}
\end{theorem}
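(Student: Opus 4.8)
The plan is to give the classical contraction-mapping proof of the implicit function theorem; the cited note of \cite{Halkin1974} offers an economical route, but I would reconstruct the argument directly. First I would fix notation by writing $\v{A}=\v{J}_{\v{f},\v{y}}\v{f}(\v{x}^*,\v{y}^*)$ for the $\v{y}$-Jacobian at the base point, which is invertible by hypothesis, and introduce the auxiliary map $\v{\Phi}(\v{x},\v{y})=\v{y}-\v{A}^{-1}\v{f}(\v{x},\v{y})$ on $U\times V$. For fixed $\v{x}$ its fixed points in $\v{y}$ are exactly the solutions of $\v{f}(\v{x},\v{y})=\v{0}$, so it suffices to solve this fixed-point problem. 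I would then compute the $\v{y}$-Jacobian of $\v{\Phi}$, namely $\v{I}-\v{A}^{-1}\v{J}_{\v{f},\v{y}}\v{f}(\v{x},\v{y})$, observe that it vanishes at $(\v{x}^*,\v{y}^*)$, and invoke continuity of the derivative (here $\v{f}\in C^1$) to find $\rho>0$ with $\bar{B}(\v{y}^*,\rho)\subset V$ and a neighbourhood $U_0\subset U$ of $\v{x}^*$ on which $\lVert\v{I}-\v{A}^{-1}\v{J}_{\v{f},\v{y}}\v{f}(\v{x},\v{y})\rVert\le\tfrac12$; shrinking further if needed (using that the invertible matrices form an open set) I would also ensure $\v{J}_{\v{f},\v{y}}\v{f}(\v{x},\v{y})$ stays invertible on $U_0\times\bar{B}(\v{y}^*,\rho)$.

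Next I would run Banach's fixed-point theorem uniformly in $\v{x}$. Integrating the $\v{y}$-Jacobian of $\v{\Phi}$ along the segment between two points of the convex ball shows $\v{\Phi}(\v{x},\cdot)$ is a $\tfrac12$-contraction on $\bar{B}(\v{y}^*,\rho)$ for every $\v{x}\in U_0$; since $\v{\Phi}(\v{x}^*,\v{y}^*)=\v{y}^*$ and $\v{\Phi}$ is continuous in $\v{x}$, I would pick a neighbourhood $U^*\subset U_0$ with $\lVert\v{\Phi}(\v{x},\v{y}^*)-\v{y}^*\rVert\le\rho/2$ for $\v{x}\in U^*$, so that $\v{\Phi}(\v{x},\cdot)$ maps $\bar{B}(\v{y}^*,\rho)$ into itself there. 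Banach's theorem then delivers a unique $\v{h}(\v{x})\in\bar{B}(\v{y}^*,\rho)$ with $\v{\Phi}(\v{x},\v{h}(\v{x}))=\v{h}(\v{x})$ for all $\v{x}\in U^*$, i.e. a function $\v{h}:U^*\to\mathbb{R}^n$ obeying \noeq{impl_sol} and \noeq{impl_sol_fix}. Lipschitz continuity of $\v{h}$ I would extract from $\v{h}(\v{x}_1)-\v{h}(\v{x}_2)=\v{\Phi}(\v{x}_1,\v{h}(\v{x}_1))-\v{\Phi}(\v{x}_2,\v{h}(\v{x}_2))$ by splitting off the $\tfrac12$-contraction term and using that $\v{x}\mapsto\v{\Phi}(\v{x},\v{y})$ is Lipschitz (since $\v{f}\in C^1$): this gives $\lVert\v{h}(\v{x}_1)-\v{h}(\v{x}_2)\rVert\le 2\lVert\v{\Phi}(\v{x}_1,\v{h}(\v{x}_2))-\v{\Phi}(\v{x}_2,\v{h}(\v{x}_2))\rVert\le L\lVert\v{x}_1-\v{x}_2\rVert$.

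The hard part will be upgrading $\v{h}$ from merely Lipschitz to $C^1$ with the stated derivative formula. I would fix $\v{x}_0\in U^*$, set $\v{y}_0=\v{h}(\v{x}_0)$, and write down the candidate $\v{B}=-[\v{J}_{\v{f},\v{y}}\v{f}(\v{x}_0,\v{y}_0)]^{-1}[\v{J}_{\v{f},\v{x}}\v{f}(\v{x}_0,\v{y}_0)]$. Starting from $\v{f}(\v{x}_0+\v{k},\v{h}(\v{x}_0+\v{k}))=\v{f}(\v{x}_0,\v{y}_0)=\v{0}$ and Fréchet differentiability of $\v{f}$ at $(\v{x}_0,\v{y}_0)$, I would expand to get $\v{0}=\v{J}_{\v{f},\v{x}}\v{f}(\v{x}_0,\v{y}_0)\v{k}+\v{J}_{\v{f},\v{y}}\v{f}(\v{x}_0,\v{y}_0)(\v{h}(\v{x}_0+\v{k})-\v{y}_0)+o(\lVert\v{k}\rVert+\lVert\v{h}(\v{x}_0+\v{k})-\v{y}_0\rVert)$; the Lipschitz bound $\lVert\v{h}(\v{x}_0+\v{k})-\v{y}_0\rVert\le L\lVert\v{k}\rVert$ reduces the remainder to $o(\lVert\v{k}\rVert)$, and applying $[\v{J}_{\v{f},\v{y}}\v{f}(\v{x}_0,\v{y}_0)]^{-1}$ gives $\v{h}(\v{x}_0+\v{k})-\v{y}_0=\v{B}\v{k}+o(\lVert\v{k}\rVert)$, hence $\partial\v{h}/\partial\v{x}(\v{x}_0)=\v{B}$, which is \noeq{impl_sol_deriv}. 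Continuity of $\v{x}\mapsto\partial\v{h}/\partial\v{x}$ is then automatic, being a composition of the continuous map $\v{x}\mapsto(\v{x},\v{h}(\v{x}))$, the continuous Jacobians of $\v{f}$, and matrix inversion/multiplication, so $\v{h}\in C^1(U^*)$. As an alternative I could instead apply the inverse function theorem to $\v{F}(\v{x},\v{y})=(\v{x},\v{f}(\v{x},\v{y}))$, whose Jacobian at $(\v{x}^*,\v{y}^*)$ is block-triangular with invertible diagonal blocks $\v{I}$ and $\v{A}$, and read off $\v{h}$ from the local inverse --- but this only moves the same contraction and differentiability estimates into the proof of that theorem.
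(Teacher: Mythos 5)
The paper offers no proof of this statement at all: it is imported verbatim (``adopted to our notation'') from the cited reference of Halkin (1974), so there is no in-paper argument to compare yours against. Your reconstruction is the standard contraction-mapping proof of the implicit function theorem and it is correct: the auxiliary map $\v{\Phi}(\v{x},\v{y})=\v{y}-\v{A}^{-1}\v{f}(\v{x},\v{y})$, the uniform $\tfrac12$-contraction on a closed ball obtained by integrating the $\v{y}$-Jacobian, the self-mapping condition via $\lVert\v{\Phi}(\v{x},\v{y}^*)-\v{y}^*\rVert\le\rho/2$, the Lipschitz estimate for $\v{h}$, and the upgrade to differentiability by expanding $\v{f}$ at $(\v{x}_0,\v{h}(\v{x}_0))$ and absorbing the remainder using the Lipschitz bound are all sound; you also correctly shrink the neighbourhood so that $\v{J}_{\v{f},\v{y}}\v{f}$ remains invertible, which is needed for the derivative formula to hold on all of $U^*$ and for continuity of $\partial\v{h}/\partial\v{x}$. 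Two small remarks. First, under the $C^1$ hypothesis as stated, your proof is complete; Halkin's actual contribution is a version under weaker hypotheses (differentiability only at the base point), which the paper's redundant phrasing (``$C^1$'' plus ``the Jacobians exist at $(\v{x}^*,\v{y}^*)$'') hints at but does not require you to handle. Second, for the Lipschitz-in-$\v{x}$ step you should make explicit that you choose $U^*$ with compact closure inside $U_0$ so that $\v{J}_{\v{f},\v{x}}\v{f}$ is bounded there; this is implicit in your appeal to $\v{f}\in C^1$ but worth stating. Neither point affects correctness.
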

Note that the function $\v{g}(\v{a}, \v{x})$ defined in \eq{XOS_g} is
continuous and almost everywhere differentiable. The partial
derivatives are given by
\begin{align}
  \label{eq:g_diff}
  \frac{\partial}{\partial s_j} g_i^s(\v{a}, \v{x}) &= 
                                                        \left\{ \begin{array}{rp{1cm}l} M^s_{ij} & & \mbox{if firm $i$ is solvent} \\
                                                                  0 & & \mbox{otherwise} \end{array} \right. \\
  \frac{\partial}{\partial s_j} g_i^r(\v{a}, \v{x}) &= 
                                                        \left\{ \begin{array}{rp{1cm}l} 0 & & \mbox{if firm $i$ is solvent} \\
                                                                  M^s_{ij} & & \mbox{otherwise} \end{array} \right. \\
  \frac{\partial}{\partial r_j} g_i^s(\v{a}, \v{x}) &= 
                                                        \left\{ \begin{array}{rp{1cm}l} M^d_{ij} & & \mbox{if firm $i$ is solvent} \\
                                                                  0 & & \mbox{otherwise} \end{array} \right. \\
  \frac{\partial}{\partial r_j} g_i^r(\v{a}, \v{x}) &= 
                                                        \left\{ \begin{array}{rp{1cm}l} 0 & & \mbox{if firm $i$ is solvent} \\
                                                                  M^d_{ij} & & \mbox{otherwise} \end{array} \right. \\
  \frac{\partial}{\partial a_j} g_i^s(\v{a}, \v{x}) &= 
                                                        \left\{ \begin{array}{rp{1cm}l} \phantom{M} 1 & & \mbox{if $i = j$ and firm $i$ is solvent} \\
                                                                  0 & & \mbox{otherwise} \end{array} \right. \\
  \frac{\partial}{\partial a_j} g_i^r(\v{a}, \v{x}) &= 
                                                        \left\{ \begin{array}{rp{1cm}l} \phantom{M} 0 & & \mbox{if $i = j$ and firm $i$ is solvent} \\
                                                                  1 & & \mbox{otherwise} \end{array} \right. \, .
\end{align}
Here, a firm $i$ is solvent if its asset value $v_i$ is sufficient to
repay its nominal debt $d_i$, i.e.
$v_i = a_i + \sum_{j=1}^n M_{ij}^s s_j + \sum_{j=1}^n M_{ij}^d r_j >
d_i$.
The derivatives of $\v{g}$ exist everywhere except for the boundary
case $v_i = d_i$. Defining the solvency vector
$\v{\xi} = (\mathbbm{1}_{v_i > d_1}(v_1), \ldots, \mathbbm{1}_{v_n >
  d_n}(v_n))$,
the partial derivatives of $\v{g}$ with respect to $\v{x}$ can be
collected in a matrix as follows
\begin{align}
  \label{eq:g_diff_mat}
  \frac{\partial}{\partial \v{x}} \v{g}(\v{a}, \v{x}) 
  &= \left[
    \begin{array}{ccc}
      \diag(\v{\xi}) \v{M}^s & & \diag(\v{\xi}) \v{M}^d \\
      & & \\
      \diag(\v{1}_n - \v{\xi}) \v{M}^s &  & \diag(\v{1}_n - \v{\xi}) \v{M}^d
    \end{array}
    \right] \\
  &= \diag\left((\v{\xi}; \v{1}_n - \v{\xi})\right) \left[
    \begin{array}{ccc}
      \v{M}^s & & \v{M}^d \\
      & & \\
      \v{M}^s &  & \v{M}^d
    \end{array}
    \right]                                 
\end{align}

Thus, defining $\v{f}(\v{a}, \v{x}) = \v{x} - \v{g}(\v{a}, \v{x})$ we
obtain by the implicit function theorem \ref{thm:impl_fun}
\begin{corollary}
  \label{coro:fix_diff}
  The partial derivatives of $\v{x}^*(\v{a})$ are given by
  \begin{align}
    \label{eq:fix_diff}
    \frac{\partial}{\partial \v{a}} \v{x}^*(\v{a}) 
    &= \left[ \v{I}_{2 n \times 2 n} - \frac{\partial}{\partial \v{x}} \v{g}(\v{a}, \v{x}) \right]^{-1}
      \left[ \begin{array}{c} \diag(\v{\xi}) \\ \\ \diag(\v{1}_n - \v{\xi}) \end{array} \right]
  \end{align}
\end{corollary}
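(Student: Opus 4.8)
The plan is to apply the implicit function theorem (Theorem \ref{thm:impl_fun}) directly to the map $\v{f}(\v{a}, \v{x}) = \v{x} - \v{g}(\v{a}, \v{x})$, treating $\v{a} \in \mathbb{R}^n$ as the exogenous variable (playing the role of $\v{x}$ in the theorem) and $\v{x} \in \mathbb{R}^{2n}$ as the implicitly defined variable (playing the role of $\v{y}$). First I would note that by Theorem~1, the fixed point equation $\v{f}(\v{a}, \v{x}^*) = \v{0}$ has a unique positive solution $\v{x}^*(\v{a})$ for each $\v{a} > \v{0}$, which supplies the base point $(\v{a}, \v{x}^*(\v{a}))$ required by the implicit function theorem. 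Second, away from the measure-zero set where some firm sits exactly at the solvency boundary $v_i = d_i$, the function $\v{g}$ is continuously differentiable with the partial derivatives recorded in \eq{g_diff} and \eq{g_diff_mat}; hence $\v{f}$ is $C^1$ there, with $\v{J}_{\v{f},\v{x}} = \v{I}_{2n \times 2n} - \frac{\partial}{\partial \v{x}}\v{g}(\v{a},\v{x})$ and $\frac{\partial}{\partial \v{a}}\v{f}(\v{a},\v{x}) = -\frac{\partial}{\partial \v{a}}\v{g}(\v{a},\v{x})$, the latter being precisely the $2n \times n$ block matrix $\bigl(\diag(\v{\xi}); \diag(\v{1}_n - \v{\xi})\bigr)$ read off from the last two lines of \eq{g_diff}.

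The key analytic step is to verify that $\v{J}_{\v{f},\v{x}} = \v{I}_{2n\times 2n} - \frac{\partial}{\partial \v{x}}\v{g}(\v{a},\v{x})$ is invertible. I would argue this via the Neumann series: it suffices to show the spectral radius (or, more elementarily, a suitable operator norm) of $\v{G} := \frac{\partial}{\partial \v{x}}\v{g}(\v{a},\v{x})$ is strictly less than $1$. From \eq{g_diff_mat}, $\v{G} = \diag\bigl((\v{\xi};\v{1}_n-\v{\xi})\bigr) \v{N}$ where $\v{N}$ is the $2n\times 2n$ block matrix with all four blocks equal to $\v{M}^s$ (left columns) and $\v{M}^d$ (right columns). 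Because $\v{\xi}$ has entries in $\{0,1\}$ and the two diagonal blocks $\diag(\v{\xi})$ and $\diag(\v{1}_n-\v{\xi})$ are complementary, each row of $\v{G}$ equals the corresponding row of either $\v{M}^s$ (concatenated as $[\v{M}^s_{i\cdot}\ \v{0}]$ or $[\v{M}^s_{i\cdot}\ \v{M}^d_{i\cdot}]$ depending on block) — concretely, row $i$ of $\v{G}$ has entries $\xi_i M^s_{ij}$ and $\xi_i M^d_{ij}$ in the upper block, and $(1-\xi_i)M^s_{ij}, (1-\xi_i)M^d_{ij}$ in the lower. I would instead bound the $\ell_1$ operator norm of $\v{G}^T$, equivalently the maximum \emph{column} sum of $\v{G}$: the column of $\v{G}$ indexed by equity of firm $j$ has entries $\xi_i M^s_{ij}$ (upper) and $(1-\xi_i)M^s_{ij}$ (lower), summing over all $2n$ rows to $\sum_i \xi_i M^s_{ij} + \sum_i (1-\xi_i) M^s_{ij} = \sum_i M^s_{ij} \le 1$ by \eq{assu_1a}; similarly the debt-of-firm-$j$ column sums to $\sum_i M^d_{ij} \le 1$. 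Thus $\v{G}$ has maximum column sum $\le 1$. The strict inequality \eq{assu_1b} guarantees at least one equity column and one debt column has sum $< 1$, and combined with Assumption~\ref{assu_1} (irreducibility-type structure is not even needed here — a direct Neumann/Perron–Frobenius argument on the substochastic matrix, exactly as used to prove uniqueness in Theorem~1 via Fischer's result, applies) one concludes the spectral radius of $\v{G}$ is strictly below $1$, so $\v{I}_{2n\times 2n} - \v{G}$ is invertible. (Alternatively, one can simply cite that this is the same Jacobian-invertibility condition underlying the contraction argument behind Theorem~1.)

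With invertibility established, Theorem~\ref{thm:impl_fun} yields a $C^1$ function $\v{h}$ near $\v{a}$ with $\v{f}(\v{a}', \v{h}(\v{a}')) = \v{0}$ and $\v{h}(\v{a}) = \v{x}^*(\v{a})$; by uniqueness of the fixed point (Theorem~1) this local function must coincide with $\v{x}^*$, so $\v{x}^*$ is differentiable at $\v{a}$ and \eq{impl_sol_deriv} gives
\[
\frac{\partial}{\partial \v{a}}\v{x}^*(\v{a}) = -\bigl[\v{J}_{\v{f},\v{x}}\bigr]^{-1}\frac{\partial}{\partial \v{a}}\v{f}(\v{a},\v{x}^*) = \bigl[\v{I}_{2n\times 2n} - \tfrac{\partial}{\partial \v{x}}\v{g}(\v{a},\v{x})\bigr]^{-1}\Bigl(\diag(\v{\xi});\,\diag(\v{1}_n-\v{\xi})\Bigr),
\]
which is exactly \eq{fix_diff}. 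I would close by remarking that this holds for all $\v{a}$ outside the finite union of boundary hypersurfaces $\{v_i = d_i\}$ — a Lebesgue-null set — which is all that is needed for the subsequent expectation computation in \eq{greek_chain}, since the random asset vector $\v{a}_T(\v{Z})$ hits that set with probability zero. The main obstacle is the invertibility claim: one must be slightly careful that the solvency pattern $\v{\xi}$ does not conspire to make $\v{I} - \v{G}$ singular, but the column-sum computation above shows $\v{G}$ inherits strict sub-stochasticity from $\v{M}^s, \v{M}^d$ regardless of $\v{\xi}$, so no such conspiracy is possible.
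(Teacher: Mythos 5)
Your proposal follows essentially the same route as the paper: apply Theorem \ref{thm:impl_fun} to $\v{f}(\v{a},\v{x}) = \v{x} - \v{g}(\v{a},\v{x})$, read off $\v{J}_{\v{f},\v{x}} = \v{I} - \frac{\partial}{\partial\v{x}}\v{g}$ and $\frac{\partial}{\partial\v{a}}\v{f} = -(\diag(\v{\xi});\diag(\v{1}_n-\v{\xi}))$, and invert. You are in fact more explicit than the paper on the invertibility of $\v{J}_{\v{f},\v{x}}$ (the paper defers this to its Katz--Bonacich remark), and your column-sum computation showing that $\v{G}$ inherits column sums $\sum_i M^{s}_{ij}$ and $\sum_i M^{d}_{ij}$ regardless of $\v{\xi}$ is correct and worth writing out.

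The one genuine weak point is your parenthetical claim that ``irreducibility-type structure is not even needed here.'' That is false as stated: column sums $\leq 1$ with strict inequality in only \emph{some} columns (which is all \eq{assu_1b} provides) does not force the spectral radius below $1$. Concretely, take three firms with $M^d_{12} = M^d_{21} = 1$ and firm $3$ isolated; Assumption \ref{assu_1} is satisfied via $j_d = 3$, yet if firms $1$ and $2$ are insolvent the relevant principal submatrix of $\v{G}$ is the $2\times 2$ permutation matrix, so $\v{I} - \v{G}$ is singular. Your column-sum bound delivers $\|\v{G}\|_{1}\leq 1$ but not $<1$, and the jump to ``spectral radius strictly below one'' needs either irreducibility/connectivity of the cross-holding network or the uniformly strict condition $\sum_i M^{s,d}_{ij} < 1$ for all $j$ --- which is exactly the ``slightly stronger requirement'' the paper itself invokes in Appendix \ref{app:hyp} to obtain a contraction. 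To be fair, the paper's own proof of this corollary is equally loose on this point (it simply asserts that Assumption \ref{assu_1} suffices), so your proposal is no worse than the original; but you should either strengthen the hypothesis or add a connectivity argument rather than assert that none is needed.
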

\begin{proof}
  Use that
$\frac{\partial}{\partial \v{x}} \v{f}(\v{a}, \v{x}) = \v{I}_{2 n
  \times 2 n} - \frac{\partial}{\partial \v{x}} \v{g}(\v{a}, \v{x})$
and
$\frac{\partial}{\partial \v{a}} \v{f}(\v{a}, \v{x}) = -
\frac{\partial}{\partial \v{a}} \v{g}(\v{a}, \v{x})$.
Then, the result follows from theorem \ref{thm:impl_fun} and
$\frac{\partial}{\partial \v{a}} \v{g}(\v{a}, \v{x}) =
\left[ \begin{array}{c} \diag(\v{\xi}) \\ \diag(\v{1}_n -
    \v{\xi}) \end{array} \right]$.
As explained below, assumption \ref{assu_1} ensures that
$\frac{\partial}{\partial \v{x}} \v{f}(\v{a}, \v{x})$ is invertible as
required. 
\end{proof}

Finally, combining \eq{greek_Q} and (\ref{eq:greek_chain}) with corollary
\ref{coro:fix_diff} we formally compute the network Greeks as
\begin{align}
  \label{eq:greek_net}
  \frac{\partial}{\partial \v{\theta}} \v{x}_t
  &= \E_t^Q\left[ \left( \frac{\partial}{\partial \v{\theta}} e^{- r \tau} \right) \v{x}^*(\v{a}_{T}(Z; \theta)) \right. \nonumber \\
  &\phantom{=} \left. + e^{- r \tau} \left[ \v{I}_{2 n \times 2 n} - \frac{\partial}{\partial \v{x}} \v{g}(\v{a}, \v{x}) \right]^{-1}
      \left[ \begin{array}{c} \diag(\v{\xi}) \\ \\ \diag(\v{1}_n - \v{\xi}) \end{array} \right]
  \frac{\partial}{\partial \v{\theta}} \v{a}_{T}(Z; \theta) \right]
\end{align}
where the expectation is well-defined as the derivatives exist almost
everywhere, i.e. except for a set of measure zero.

Note that the effect of the cross-holding network is fully captured by
the matrix
\begin{align}
  \label{eq:net_inv}
  \v{W}
  &= \left[ \v{I}_{2 n \times 2 n} - \frac{\partial}{\partial \v{x}}
    \v{g}(\v{a}, \v{x}) \right]^{-1}
  = \left[ \v{I}_{2 n \times 2 n} - \diag\left( (\v{\xi}; \v{1}_n - \v{\xi}) \right)
    \left[
    \begin{array}{ccc}
      \v{M}^s & & \v{M}^d \\
      & & \\
      \v{M}^s &  & \v{M}^d
    \end{array}
                   \right] \right]^{-1}
\end{align}
weighting the contributions from the direct sensitivities
$\frac{\partial}{\partial \v{\theta}} \v{a}_{\tau}(Z; \theta)$. In
case of no cross-holdings, $\v{M}^s, \v{M}^d \equiv \v{0}$, this
reduces to the identity matrix by \eq{g_diff_mat}. 

The network effect captured by the weighting matrix $\v{W}$ has some interesting
implications:
\begin{enumerate}
\item It can be considered as a Katz-Bonacich type centrality, i.e. of
  the form $(\v{I} - \alpha \v{M})^{-1}$. In our case, the weighting
  factor $\alpha$ is different for different firms and endogenously
  derived from their solvency status $\v{\xi} \in \{0, 1\}$. Here, we note that
  Katz-Bonacich centrality exists as long as the largest eigenvalue of
  $\alpha \v{M}$ is below one. Thus, from this connection we obtain
  that the inverse in \eq{net_inv} exists, as
  $\v{\xi} \in \{0, 1\}^n$ and we assumed sub-stochastic cross-holding
  matrices $\v{M}^s, \v{M}^d$ by assumption \ref{assu_1}. Furthermore, it opens up the
  possibility for targeted interventions striving to optimally
  control node centralities \citep{Reiffersmasson2015}.

  Moreover, using the expansion
  \begin{align}
    (\v{I} - \alpha \v{M})^{-1} &= \sum_{k=0}^{\infty} \left(\alpha
    \v{M}\right)^k = \v{I} + \alpha \v{M} + \left( \alpha \v{M}
    \right)^2 + \ldots    
  \end{align}
  it is clear that the network always amplifies initial shocks.
\item The effective network changes based on the distance to default of
  firms. Consider the case when all firms are solvent, i.e.
  $\v{\xi} = \v{1}$. Then, from corollary \ref{coro:fix_diff} the
  sensitivity of equity and debt values to changing asset prices are
  given as
  \begin{align}
    \label{eq:diff_solvent}
    \frac{\partial}{\partial \v{a}} \v{x}^*(\v{a}) 
    &= \left[ \begin{array}{c} (\v{I}_{n \times n} - \v{M}^s)^{-1} \\ \v{0} \end{array} \right] \, .
  \end{align}
  Thus, only the cross-holdings of equity are visible and debt values
  are uneffected as all contracts are honored at their nominal
  values. In contrast, when all firms are insolvent, i.e.
  $\v{\xi} = \v{0}$, the sensitivities are given by
  \begin{align}
    \label{eq:diff_insolvent}
    \frac{\partial}{\partial \v{a}} \v{x}^*(\v{a}) 
    &= \left[ \begin{array}{c} \v{0} \\ (\v{I}_{n \times n} - \v{M}^d)^{-1} \end{array} \right] \, .
  \end{align}
  with only debt cross-holdings visible. This implies that the Greeks
  of market values, being averages as of \eq{greek_net}, can change
  substantially if the default probability of companies
  changes. E.g. debt cross-holdings might be almost invisible under
  normal circumstances, yet drive the risk sensitivities in a crisis
  when default probabilities rise. Suggesting that systemic risk
  management must take into account stressed scenarios, i.e. by
  elevated default probabilities, to be meaningful in a crisis.

  Based on this observation, we hypothesize that these extreme cases
  correspond to the largest sensitivities of equity and debt
  respectively.
  \begin{hypothesis}
    \label{hyp:mono}
    Let assumption \ref{assu_1} hold and split the partial derivatives
    of $\v{x}^*(\v{a})$ as
    \begin{align}
      \left[ \begin{array}{c} \v{u}^s \\ \v{u}^d \end{array} \right] 
   &= \frac{\partial}{\partial \v{a}} \v{x}^*(\v{a}) \, .
    \end{align}
    Then, $\v{u}^s$ is monotonically increasing and $\v{u}^d$ is
    monotonically decreasing when considered as a function of $\v{\xi}$.
  \end{hypothesis}
  A proof under slightly stronger assumptions can be found in appendix
  \ref{app:hyp}.
\item In case of cross-holdings of debt only, \cite{Demange2018} has
  proposed a {\em threat index} $\v{\mu}$ measuring the spill-over
  potential of each firm. In appendix \ref{app:Demange} we derive the
  exact relation with our model and show how the index $\v{\mu}$ can be
  computed from the partial derivatives derived in corollary
  \ref{coro:fix_diff}.

  Based on the idea of a threat index, one might be tempted to
  consider the expectation
  $\E_t^Q[\frac{\partial}{\partial A_T} \v{x}^*(\v{A_T})]$ as a
  measure of firms' spill-overs towards each other. In particular, the
  total impact $\pi_i$ of a change to the external asset of firm
  $i$ on the value of all other firms might be measured as
  \begin{align}
    \label{eq:impact}
    \v{\pi} &= \E_t^Q[ \sum_i \frac{\partial v_i}{\partial \v{A}_T} ] \\
    &= \E_t^Q[\v{1}^T \frac{\partial}{\partial \v{A}A_T} \v{x}^*(\v{A}_T)]^T  \\
    &= \E_t^Q\left[\v{1}^T \v{W} (\diag(\v{\xi}); \diag(\v{1}_n - \v{\xi}))\right]^T \, .
  \end{align}
  While $\pi_i$ might provide a useful proxy for a firm's systemic
  risk, the Greeks in general consider a firm's risk differently. In
  particular, the total risk sensitivity of all banks with respect to
  the parameters $\v{\theta}$ cannot be computed as
  \begin{align}
    \sum_i \frac{\partial}{\partial \v{\theta}} (\v{x}_t)_i
    &\neq
    \v{1}^T \E_t^Q\left[ \left( \frac{\partial}{\partial \v{\theta}} e^{- r \tau} \right) \v{x}^*(\v{a}_{T}(Z; \theta)) \right]
    + e^{- r \tau} \v{\pi}^T
    \E_t^Q\left[ \frac{\partial}{\partial \v{\theta}} \v{a}_{T}(Z; \theta) \right]
  \end{align}
  as $\v{\xi}$ and
  $\frac{\partial}{\partial \v{\theta}} \v{a}_{\tau}(Z; \theta)$ both
  depend on $Z$ and are generally not independent.

  Note that $\v{\pi}$ can be interpreted as the aggregate impact, i.e. on
  all firms, of an asset price shock at maturity. Similarly, 
  $\v{1}^T \Delta = \v{1}^T \E_t^Q[e^{- r \tau} \frac{\partial}{\partial \v{a}_t} \v{x}^*(\v{A_T})]$
  quantifies the aggregate impact of asset price shocks at current market
  values. As this is arguably more relevant for risk management purposes, we
  propose $\Delta^{\text{Total}} = \v{1}^T \Delta$ as a better suited
  measure of systemic risk. In the examples in section \ref{sec:illu}
  we illustrate and comment in more detail on their differences.
\item Risk management is also interesting from the perspective of
  outside investors. In this case, from proposition \ref{prop:out} we
  obtain that any risk is redistributed between them without
  amplification as compared to the risk on external assets directly:
  \begin{align}
    \frac{\partial}{\partial \v{\theta}} \sum_{i=1}^n v_{i}^{\mathrm{out}}
    &= \sum_{i=1}^n \frac{\partial}{\partial \v{\theta}} a_i
  \end{align}
  In particular, for the partial derivatives with respect to the
  external assets $\v{a}$, we find that
  $\sum_i \frac{\partial}{\partial \v{a}} v_i^{\mathrm{out}} = n$,
  i.e. on average each investor bears a risk of $1$ as he would when
  holding a single external asset directly. Yet, in general the risk
  is redistributed between outside investors, suggesting to view the
  individual $\frac{\partial v_{i}^{\mathrm{out}}}{\partial \v{a}}$
  vectors as the weights of the implicit portfolio that the investor
  holds in external assets.
\end{enumerate}

\subsection{Local approximation}

\cite{Barucca2016} have compared several network contagion models and
unified them in terms of network valuations solving for a
self-consistent fixed point solution. In particular, they showed that
this includes many existing models, including the seminal model by
\cite{Eisenberg2001} and the well-known DebtRank contagion process
\citep{Battiston2012}, valuing assets ex-post at maturity. In the case
of ex-ante valuations, i.e. at current market prices, they proposed a
{\em local approximation} The idea being that each firm evaluates its
portfolio locally, i.e. by pricing external assets and holdings of
other companies at market prices. 

Here, we discuss how this approximation relates to our setup. For
simplicity we consider the case of debt cross-holdings only. Then,
the market value of debt of firm $i$ solves
\begin{align}
  r_i &= \min\{ d_i, a_i + \sum_j M_{ij}^d r_j \} \, .
\end{align}
This can be written in the form of valuation defined by
\cite{Barucca2016} as 
\begin{align}
  r_i &= d_i \mathbbm{1}_{E_i > 0} + (E_i + d_i)^{+} \mathbbm{1}_{E_i \leq 0}
\end{align}
with $E_i = a_i + \sum_{j} M_{ij}^d r_j - d_i = v_i - d_i$.
Using the valuation functions of \cite{Barucca2016}
\begin{align}
  \mathbb{V}^e(E_i) &= 1 \\
  \mathbb{V}^e_{ij}(E_j) &= \mathbbm{1}_{E_j > 0} + \left( \frac{E_j + d_j}{d_j} \right)^+ \mathbbm{1}_{E_j \leq 0}
\end{align}
the market value of equity reads
\begin{align}
  \label{eq:local_fix}
  E_i &= a_i \mathbb{V}^e(\v{E}) + \sum_j M_{ij}^d d_j \mathbb{V}_{ij}(\v{E}) - d_i \, .
\end{align}
Notice that in our setup the total liability of firm $i$ is given by
$d_i$ and a fraction $\sum_{j} M_{ji}^d$ of it is payed to other
firms. Further, the nominal amount lend by $i$ to $j$ is given as
$M_{ij}^d d_j$. Thus \eq{local_fix} corresponds to equation (7) of
\cite{Barucca2016}.

Finally, using an ex-ante valuation, e.g. based on the Black-Scholes
formula, to compute market values at time $t$ when nominal debt
payments are due at maturity $T$, we obtain
\begin{align}
  \mathbb{V}^e(E_i(t)) &= 1 \\
  \mathbb{V}^e_{ij}(E_j(t)) &= 
    \E^Q_t\left[ \mathbbm{1}_{E_j > 0} + \left( \frac{E_j + d_j}{d_j} \right)^+ \mathbbm{1}_{E_j \leq 0} \right] \\
                       &= \E^Q_t\left[ \mathbbm{1}_{E_j(T) > 0} \right] + \E^Q_t\left[ \frac{v_i}{d_i} \mathbbm{1}_{E_j(T) \leq 0} \right] \, .
\end{align}
Thus, the nominal holding $M_{ij}^d d_j$ is adjusted by the
risk-neutral probability of non-default
$\E^Q_t[ \mathbbm{1}_{E_j(T) > 0} ]$ and the expected shortfall
$\E^Q_t[ \frac{v_i}{d_i} \mathbbm{1}_{E_j(T) \leq 0} ]$. Another
interpretation sees the resulting market price of debt as its nominal
value insured by a short Black-Scholes put option, i.e.
\begin{align}
  d_j \left( \mathbbm{1}_{E_j > 0} + \frac{v_j}{d_j} \mathbbm{1}_{E_j \leq 0} \right)
  &= d_j - \left( d_j - v_j \right)^+
\end{align}
as $E_j \leq 0 \Leftrightarrow v_j \leq d_j$. The ex-ante valuation of
firm $i$th equity can thus be written as
\begin{align}
  \label{eq:local_put}
    E_i(t) &= a_i(t) + \sum_j M_{ij}^d \left( d_j - c_{\mathrm{put}}(E_j(t) + d_j, d_j) \right) - d_i
\end{align}
where $c_{\mathrm{put}}(v_j, d_j)$ denotes the Black-Scholes price of
a put option on the value of firm $j$ with ex-ante value $v_j$ and
strike $d_j$.  Notice that this includes additional approximations
besides the standard Black-Scholes assumptions. First, since $v_j$
denotes the value of a firms asset portfolio, equity and debt are
actually basket options and would need to be priced
accordingly. Further, the volatility of the portfolio value
$\sigma_{V_j}$ is assumed known and fixed instead of being an implicit
function of the volatility of the external asset $\sigma$ as in the
exact model. Secondly, further extensions such as strategic
default \citep{Leland1994} or roll-over risk \citep{He2012} are easily
included in this setup by plugging in the according pricing functions.

Denoting the fixed point of self-consistent market values by
$\v{E}^*(\v{a})$ and employing the inverse function theorem again it
is easy to show that the derivatives of the local approximation are
given by
\begin{align}
  \label{eq:local_Greeks}
  \frac{\partial}{\partial \v{a}} \v{E}^* &=
    \left( \v{I} - \v{M}^d \diag(\v{d}) \frac{\partial}{\partial \v{E}} \mathbb{V}(\v{E}^*) \right)^{-1} \v{I} \\
  &= \left( \v{I} + \v{M}^d \diag(\Delta_{\mathrm{put}}(\v{E}^* + \v{d}, \v{d})) \right)^{-1} \, ,
\end{align}
i.e. the cross-holding matrix is weighted by the $\Delta$s of the
local valuation functions. Interestingly, a very similar result has
been independently derived by \cite{Ota2014} who has considered the
propagation of asset price shocks when firms adjust the market values
of their portfolio holdings. \cite{Ota2014} introduced the notion of
{\em marginal contagion} and showed that an initial asset price shock
$\Delta\v{IA}^0$ is amplified by a network of debt cross-holdings as
\begin{align}
  \left( \v{I} - \v{M}^d \v{\Phi} \right)^{-1} \Delta\v{IA}^0 
\end{align}
where $\v{\Phi}$ is a diagonal matrix containing the risk-neutral
probabilities of default for each firm, i.e.
$\v{\Phi} = \diag(\v{1} - \E^Q_t[\v{\xi}])$ in our notation. Thus, the
cross-holdings $\v{M}^d$ are risk adjusted by the probabilities of
default instead of the valuation $\Delta$s as derived above for the
local approximation proposed by \cite{Barucca2016}.

When ignoring default correlations, we can provide yet another
connection with ex-ante value of the partial derivatives computed in
corollary \ref{coro:fix_diff}. Considering the case of debt
cross-holdings only, it suffices to consider the debt values
$\v{r}^*(\v{a})$. The partial derivatives are then given by
\begin{align}
  \frac{\partial}{\partial \v{a}} \v{r}^*(\v{a}) 
  &= \left( \v{I} - \diag(\v{1} - \v{\xi}) \v{M}^d \right)^{-1} \diag(\v{1} - \v{\xi})
\end{align}
Series expanding the matrix inverse, the risk-neutral ex-ante
value can be computed as
\begin{align}
  \E^Q_t[ \frac{\partial}{\partial \v{a}} \v{r}^*(\v{a}) ]
  &= \E^Q_t\left[ \sum_{k=0}^{\infty} \left( \diag(\v{1} - \v{\xi}) \v{M}^d \right)^k \diag(\v{1} - \v{\xi}) \right] \\
  &= \sum_{k=0}^{\infty} \E^Q_t\left[ \left( \diag(\v{1} - \v{\xi}) \v{M}^d \right)^k \diag(\v{1} - \v{\xi}) \right] \, .
\end{align}
In general, the expectation of matrix powers is difficult to
evaluate. But, assuming that defaults occur independently,
i.e. $\E^Q_t[\xi_i \xi_j] = \E^Q_t[\xi_i] \E^Q_t[\xi_j]$, the above
expression can be approximated as
\begin{align}
  \label{eq:local_indep}
  \E^Q_t[ \frac{\partial}{\partial \v{a}} \v{r}^*(\v{a}) ]
  &\approx \sum_{k=0}^{\infty} \E^Q_t[\diag(\v{1} - \v{\xi})] \left( \v{M}^d \right)^k \E^Q_t[\diag(\v{1} - \v{\xi})] \\
  &= \left( \v{I} - \diag(\v{1} - \E^Q_t[\v{\xi}]) \v{M}^d \right)^{-1} \diag(\v{1} - \E^Q_t[\v{\xi}]) \, .
\end{align}
This is most easily seen when noting that
\begin{align}
  \left( \diag(\v{1} - \v{\xi}) \v{M}^d \right)^2_{ij}
  &= \sum_k (1 -  \xi_i) M_{ik}^d (1 - \xi_k) M_{kj}^d \\
  \left( \diag(\v{1} - \v{\xi}) \v{M}^d \right)^3_{ij}
  &= \sum_{kl} (1 - \xi_i) M_{ik}^d (1 - \xi_k) M_{kl}^d (1 - \xi_l) M_{lj}^d, \ldots 
\end{align}
making it clear that the expectations over $\v{\xi}$ can be carried out
independently. 

This expression is very similar to the marginal contagion proposed by
\cite{Ota2014}. Yet, a major difference is that the risk adjustment by
the risk neutral default probabilities $\v{1} - \E^Q_t[\v{\xi}]$ is
done along the incoming instead of the outgoing connections. This
leads to different risk amplification as the matrices $\diag(\v{1} -
\E^Q_t[\v{\xi}]) \v{M}^d$ and $\v{M}^d \diag(\v{1} - \E^Q_t[\v{\xi}])$
are different in general. A possible interpretation could be, that each
firm adjusts the value of counterparties for their default probability
when managing its risk individually. Viewed from a network perspective,
it should instead adjust for its own default probability to account
for its contagion effect on other firms. Nevertheless,
\eq{local_indep} is also an approximation as \cite{Demange2018} has
shown that defaults are not independent, but positively correlated in
case of debt cross-holdings. Thus, by hypothesis \ref{hyp:mono} the
approximation underestimates the amplification of risk in financial
networks. With these remarks, we leave it to future work to explore
and understand the implications and differences of different local
approximations and turn to numerical illustraions of the network
Greeks.

\section{Numerical illustrations}
\label{sec:illu}

Here, we consider several examples illustrating our formal
solution. Starting with a symmetric, fully connected firm network
which can be solved analytically by being reduced to a single
representative firm, we then simulate the model with two firms and on
large random networks.

\subsection{Symmetric example}

As a first example, we consider $n$ identical firms. Each firm holds a
fractions $\frac{w^s}{n - 1}$ and $\frac{w^d}{n - 1}$ with parameters
$0 \leq w^s, w^d < 1$ of each counter parties equity and debt
respectively. Under these assumptions \eq{XOS_sr_i} simplifies to
\begin{align}
  s_i &= \max\left\{0, a_i + \sum_j \frac{w^s}{n - 1} s_j + \sum_j \frac{w^d}{n - 1} r_j - d_i \right\}, \\
  r_i &= \min\left\{d_i, a_i + \sum_j \frac{w^s}{n - 1} s_j + \sum_j \frac{w^d}{n - 1} r_j \right\} \, .
\end{align}
Furthermore, we consider a symmetric situation of identical firms, all
having the same nominal debt $d_i = d, \forall i$ and external asset
$a_i = a, \forall i$. Then, by symmetry $s_i = s$ and
$r_i = r, \forall i$, reducing the problem to a one-dimensional fixed
point
\begin{align}
  s &= \max\{0, a + w^s s + w^d r - d\} \\
  r &= \min\{d, a + w^s s + w^d r\} \, .
\end{align}
Now, consider two cases corresponding to zero and positive equity:
\begin{description}
\item[$s = 0$:] Assuming $r < d$ we obtain the recursion
  $r = a + w^s s + w^d r = a + w^d r$ with solution
  $r^* = \frac{a}{1 - w^d}$. This solution is consistent with our
  assumption as long as $r^* < d \Leftrightarrow a < (1 - w^d) d$.
\item[$s > 0$:] In this case, we can consistently assume
  $r = d$. Then, $s = a + w^s s + w^d d - d$ with
  solution $s^* = \frac{a - (1 - w^d) d}{1 - w^s}$. Indeed,
  $s^* > 0 \Leftrightarrow a > (1 - w^d) d$.
\end{description}
Thus, the two solution branches are mutually exclusive and connect at
the default boundary given by $a = (1 - w^d) d$. It is easily checked
that this is indeed the condition where the firms value equals its
nominal debt, i.e. $r^* + s^* = d$. Combining the two case, we obtain the solution
\begin{align}
  \label{eq:sol_one}
  s^* &= \left\{ \begin{array}{cl} \frac{a - (1 - w^d) d}{1 - w^s} & \mbox{ if } \xi = 1 \\
                 0 & \mbox{ otherwise} \end{array} \right. \\
  r^* &= \left\{ \begin{array}{cl} \phantom{xxx} d \phantom{xxx} & \mbox{ if } \xi = 1 \\
                 \frac{a}{1 - w^d} & \mbox{ otherwise} \end{array} \right. ,
\end{align}
where the solvency indicator $\xi = 1$ if $a > (1 - w^d) d$ and
$\xi = 0$ otherwise. Figure \ref{fig:sol_one} illustrates the solution
for different combinations of $w^s$ and $w^d$. In all cases $d = 1$
is fixed without loss of generality.
\begin{figure}[h]
  \centering
  \vspace*{-1cm}
  \includegraphics[width=0.9\textwidth]{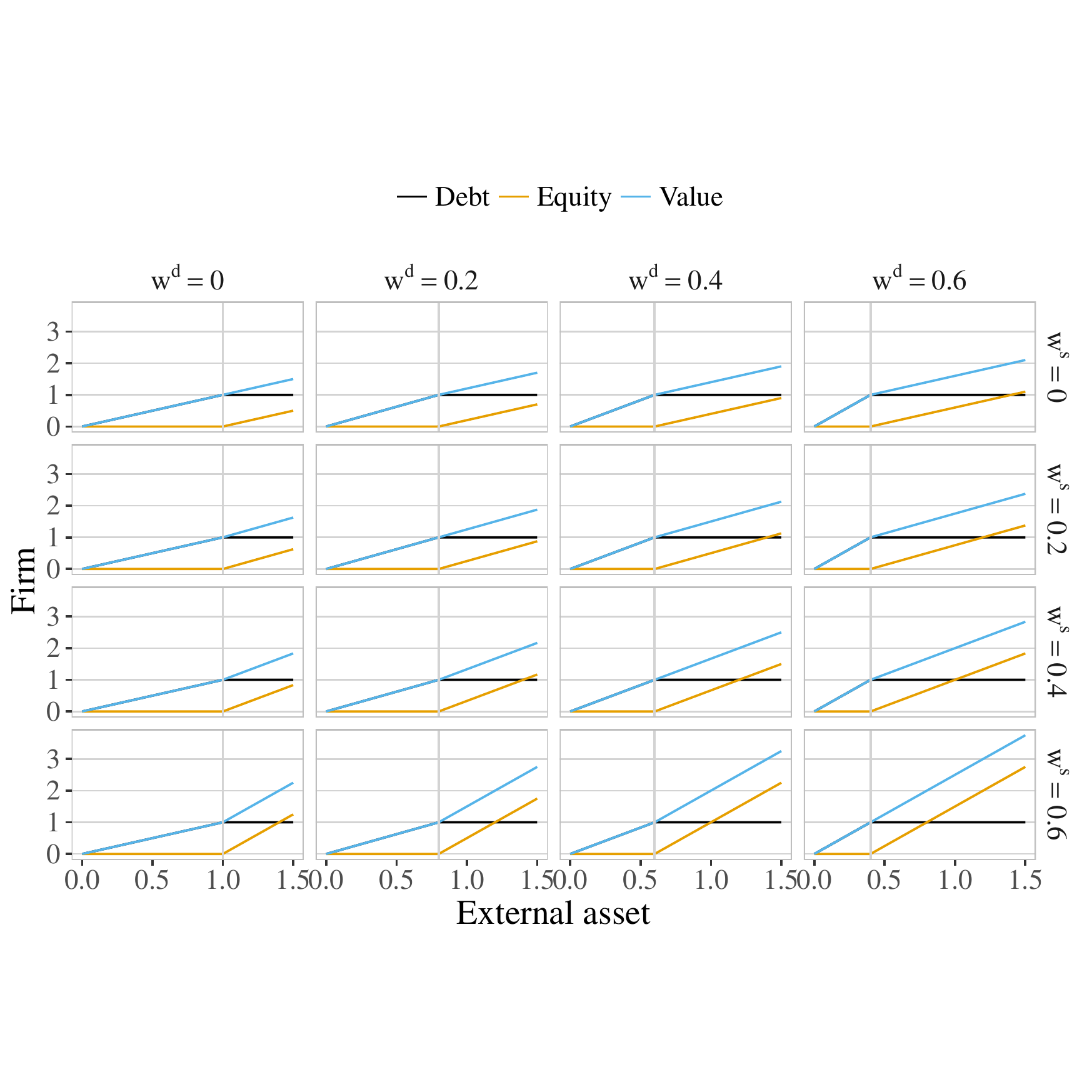} \\
  \vspace*{-1cm}
  \caption{Solution of \eq{sol_one} for cross-holding fractions of
    $w^s, w^d \in \{0, 0.2, 0.4, 0.6\}$.}
  \label{fig:sol_one}
\end{figure}
Notice how the value of the firms is increased by cross-holdings with a
clear difference between equity and debt holdings. Whereas debt
cross-holdings shift the default boundary to lower values and thus
increase the value of the firms when these are insolvent, equity
cross-holdings are ineffective in this regime and instead lead to
elevated values when firms are solvent. Furthermore, they do not
provide a risk-sharing benefit leaving the default boundary
unchanged. As shown below, this is an artefact of the symmetric
solution with a single external asset considered here and not true in
general.

Assuming a single external asset $A_t$ following a geometric Brownian
motion, we can compute the market values of debt and equity
analytically. As detailed in appendix \ref{app:sol_one}, we obtain
\begin{align}
  \label{eq:sol_one}
  s_t &= \frac{1}{1 - w^s} \left( a_t \Phi(d_{+}) - (1 - w^d) d e^{- r \tau} \Phi(d_{-}) \right) \\
  r_t &= \frac{1}{1 - w^d} \left( a_t \Phi(- d_{+}) + (1 - w^d) d e^{- r \tau} (1 - \Phi(- d_{-})) \right)
\end{align}
where $\Phi$ denotes the cumulative distribution function of a
standard normal and $d_{\pm}$ are defined as
\begin{align}
  \label{eq:sol_d_pm}
  d_{\pm} &= \frac{ \ln \frac{a_t}{(1 - w^d) d} + (r \pm \frac{1}{2} \sigma^2) \tau}{\sqrt{\tau} \sigma} \, .
\end{align}
As in the Merton model, equity can be considered a long call option
and the recovery value of debt is insured by a short put. Indeed, it
is easily checked that \eq{sol_one} can be written as
\begin{align}
  \label{eq:sol_BS}
  s_t &= \frac{1}{1 - w^s} C_{BS}(a_t, (1 - w^d) d, r, \tau, \sigma) \\
  r_t &= \frac{1}{1 - w^d} \left( e^{-r \tau} (1 - w^d) d - P_{BS}(a_t, (1 - w^d) d, r, \tau, \sigma) \right)
\end{align}
with $C_{BS}, P_{BS}$ denoting the Black-Scholes price of a call and
put option with spot price $a_t$ and strike $(1 - w^d) d$
respectively. Overall, two effects arise from cross-holdings. First,
the default boundary corresponding to the nominally required repayment
is lowered to $(1 - w^d) d$. Second, equity and debt values are
amplified by $\frac{1}{1 - w^s}$ and $\frac{1}{1 - w^d}$ respectively.
\begin{figure}[h]
  \centering
  \vspace*{-1cm}
  \includegraphics[width=0.9\textwidth]{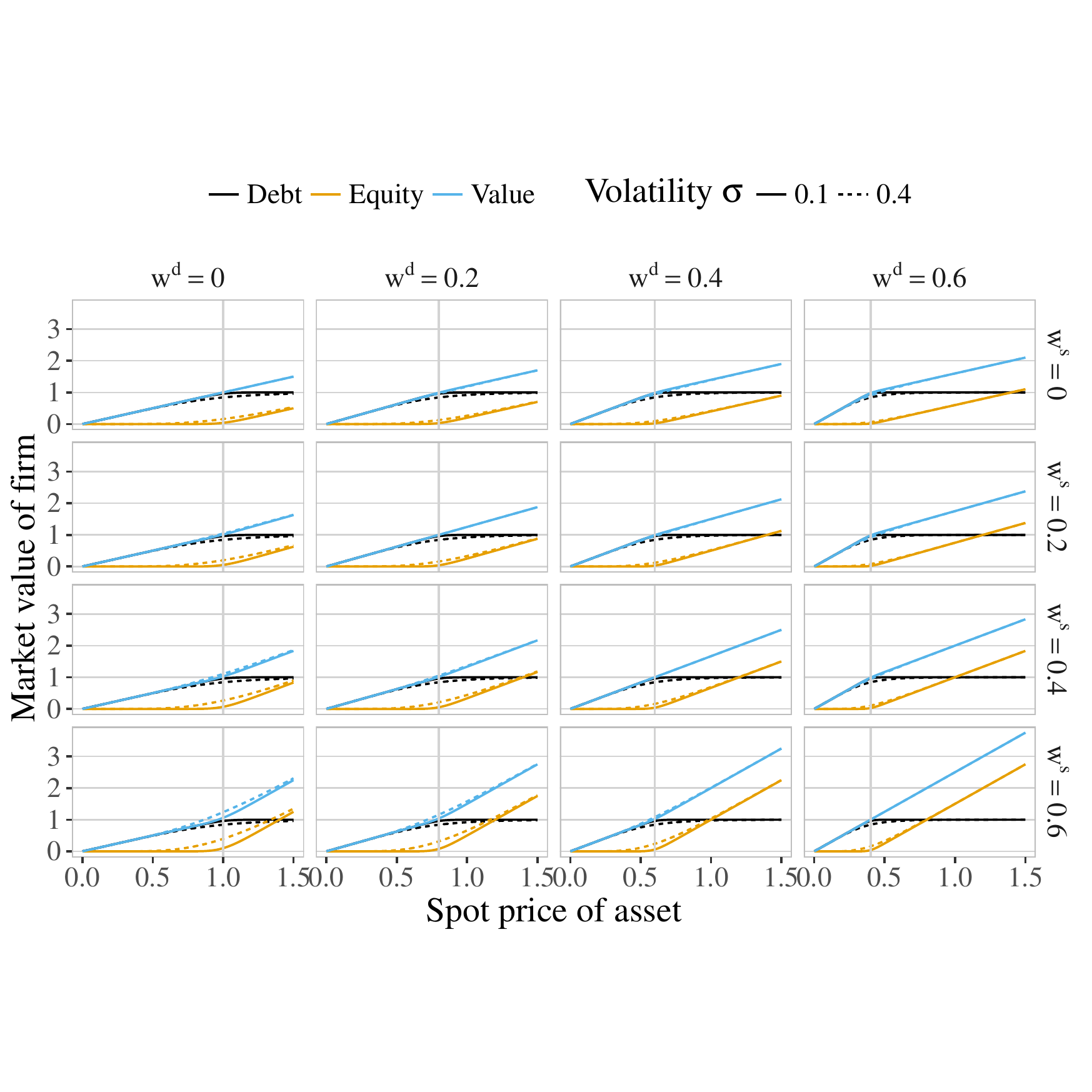} \\
  \vspace*{-1cm}
  \caption{Market values of equity and debt for cross-holding
    fractions of $w^s, w^d \in \{0, 0.2, 0.4, 0.6\}$ and volatilities $\sigma = 0.1, 0.4$.}
  \label{fig:mkt_one}
\end{figure}
Figure \ref{fig:mkt_one} shows the market values of equity and debt
together with the total value of the firm.  Compared to the values at
maturity, i.e. as in figure \ref{fig:sol_one}, the market values are
smoothed with equity values increased and debt values decreased
corresponding to the implicit long call and short put option
respectively. At higher volatilities this effect is even more
pronounced. Depending on the relative strength of equity to debt
cross-holdings, volatility thereby increases ($w^s > w^d$) or
decreases ($w^s < w^d$) the firms value.

To compute the Greeks, we can either resort to corollary
\ref{coro:fix_diff}, as illustrated in appendix \ref{app:sol_Greeks},
or simply use the corresponding Greeks from the Black-Scholes
formula. Indeed, \eq{sol_BS} implies that all Greeks are analytic
being amplifications of the standard Black-Scholes Greeks.
\begin{align}
  \frac{\partial s_t}{\partial a_t}
  &= \frac{1}{1 - w^s} \Phi(d_{+}) \\
  \frac{\partial r_t}{\partial a_t}
  &= \frac{1}{1 - w^d} \Phi(- d_{+}) \\
  \frac{\partial s_t}{\partial \sigma}
  &= \frac{1}{1 - w^s} a_t \varphi(d_{+}) \sqrt{\tau} \\
  \frac{\partial r_t}{\partial \sigma}
  &= - \frac{1}{1 - w^d} a_t \varphi(d_{+}) \sqrt{\tau} \\  
  - \frac{\partial s_t}{\partial \tau}
  &= - \frac{1}{1 - w^s} \left( \frac{a_t \varphi(d_{+}) \sigma}{2 \sqrt{\tau}}
  + r (1 - w^d) d e^{- r \tau} \Phi(d_{-}) \right) \\
  - \frac{\partial r_t}{\partial \tau}
  &= \frac{1}{1 - w^d} \left( r (1 - w^d) d e^{- r \tau}
    + \frac{a_t \varphi(d_{+}) \sigma}{2 \sqrt{\tau}}
    - r (1 - w^d) d e^{- r \tau} \Phi(- d_{-}) \right) \\
  \frac{\partial s_t}{\partial r}
  &= \frac{1}{1 - w^s} (1 - w^d) d \tau e^{- r \tau} \Phi(d_{-}) \\
  \frac{\partial r_t}{\partial r}
  &= - \frac{1}{1 - w^d} \left( (1 - w^d) d \tau e^{- r \tau}
    - (1 - w^d) d \tau e^{- r \tau} \Phi(- d_{-}) \right)
\end{align}
Figure \ref{fig:Greek_one} shows $\Delta, \mathcal{V}, \Theta$ and
$\rho$ with the parameters as above. Compared to no cross-holdings
$w^s = w^d = 0$ the amplification is clearly visible. Interestingly,
while risks are generally high in distress, i.e. when firms are close to their default boundary,
the risks with respect to $\sigma$, $\tau$ and $r$ vanish with
dropping asset values, i.e. outright default. $\Delta$ instead is
high and amplified by the cross-holdings of debt in this case.
Furthermore, $\Delta$ risk is always positive, in contrast to the other
risks which are hedged between equity and debt holders. Overall,
this suggests $\Delta$ as a suitable measure of the systemic risk arising from
cross-holdings. It is also interesting to compare $\Delta$ for the
value of the firm, i.e.
$\frac{\partial v}{\partial a_t} = \frac{1}{1 - w^s} \Phi(d_{+}) +
\frac{1}{1 - w^d} \Phi(- d_{+})$ with the systemic risk index
$\v{\pi}$ which is computed in appendix \ref{app:sol_Greeks} as
$\v{\pi} = \frac{1}{1 - w^s} \Phi(d_{-}) + \frac{1}{1 - w^d} \Phi(-
d_{-})$. Thus, the difference of the Black-Scholes model between
percent moneyness and $\Delta$ also appears in this setup.
\begin{figure}[h]
  \centering
  \vspace*{-1cm}
  \includegraphics[width=0.45\textwidth]{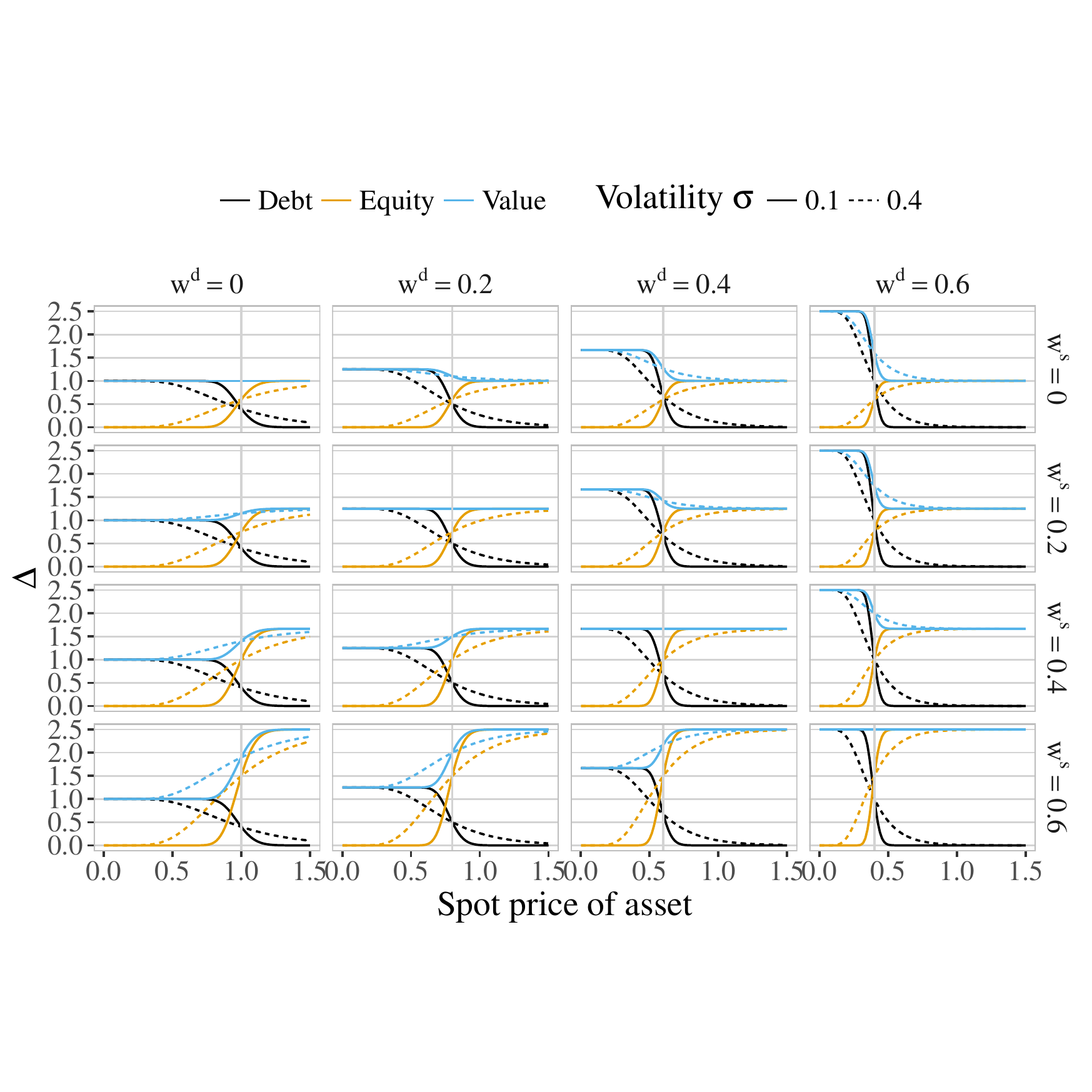}
  \includegraphics[width=0.45\textwidth]{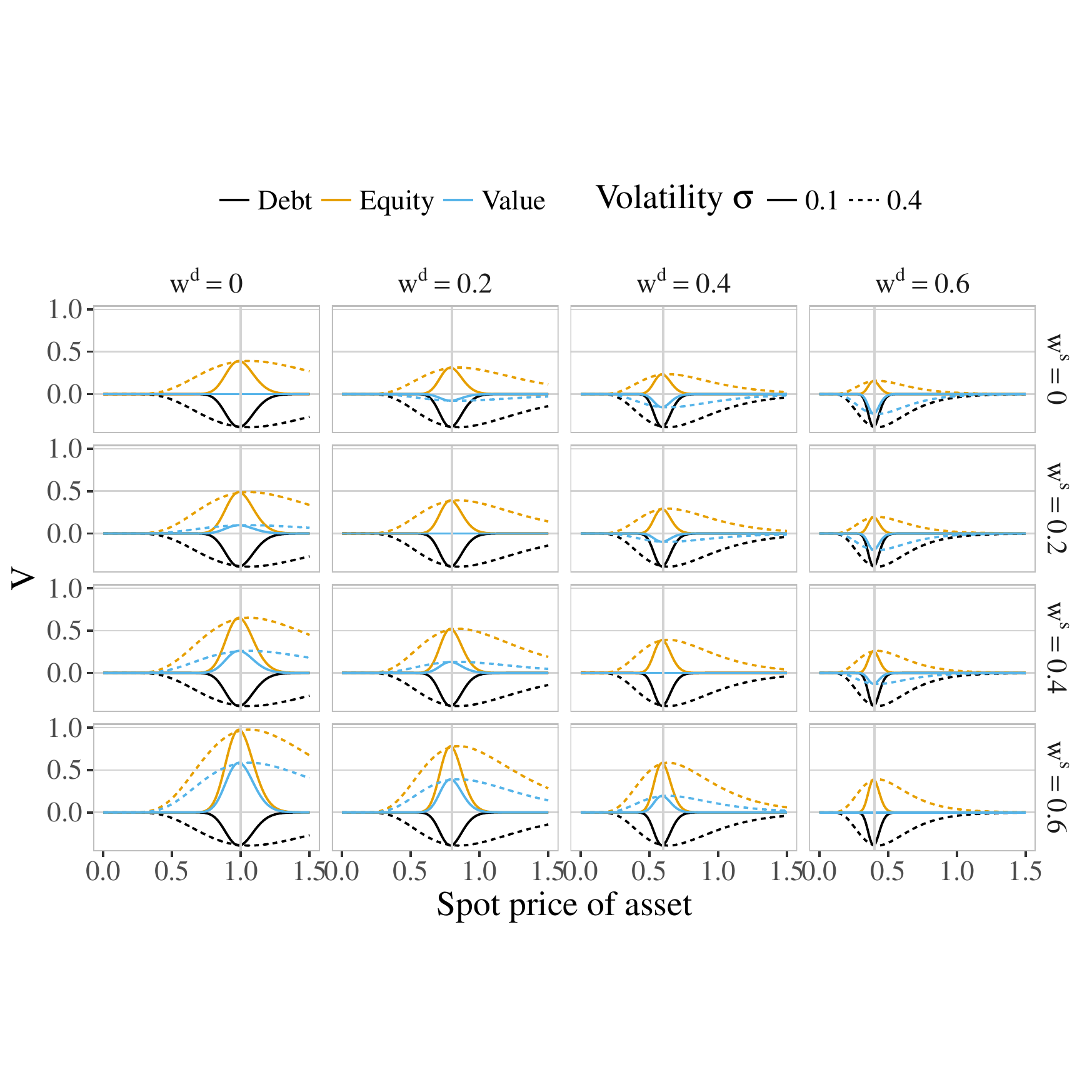} \\
  \vspace*{-2cm}
  \includegraphics[width=0.45\textwidth]{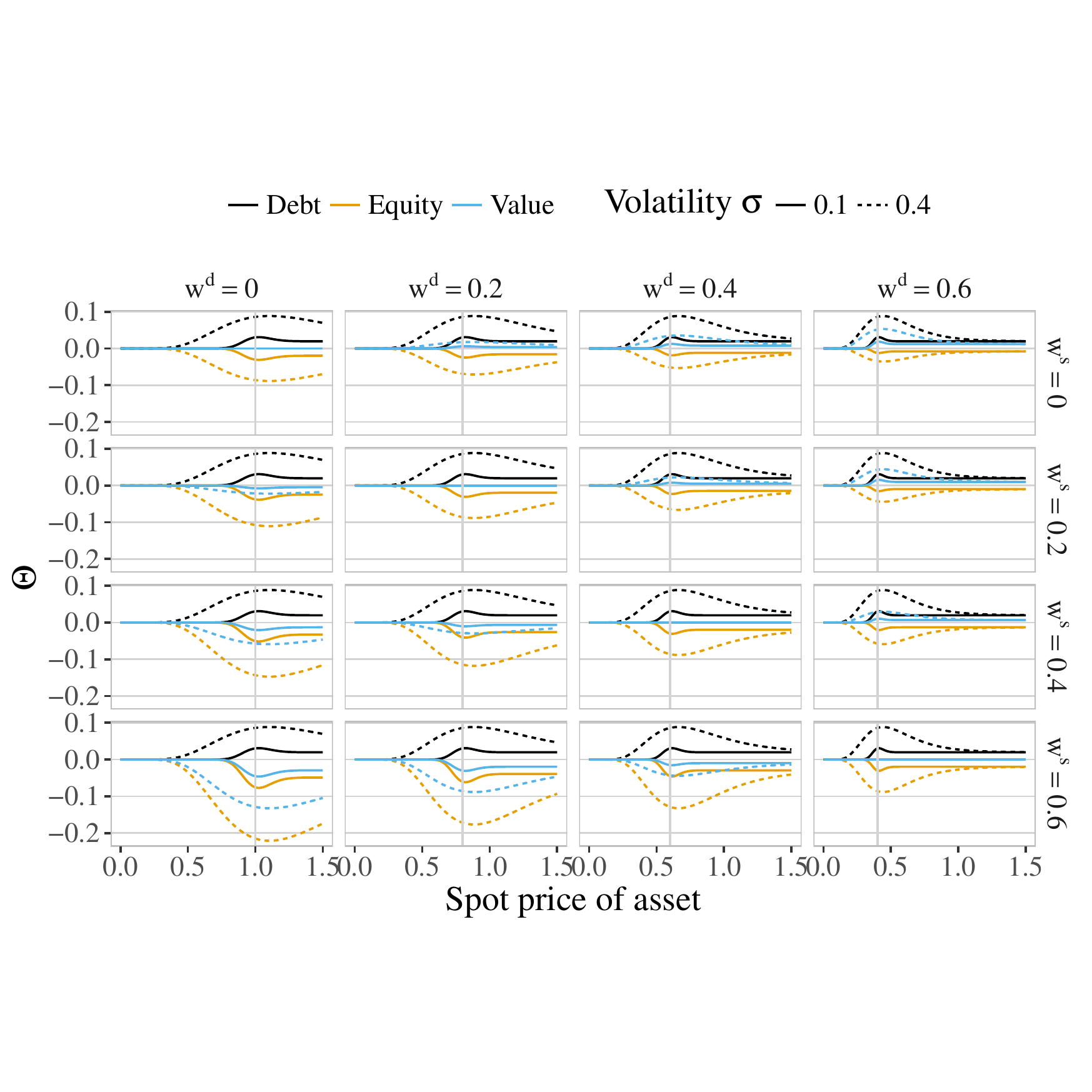}
  \includegraphics[width=0.45\textwidth]{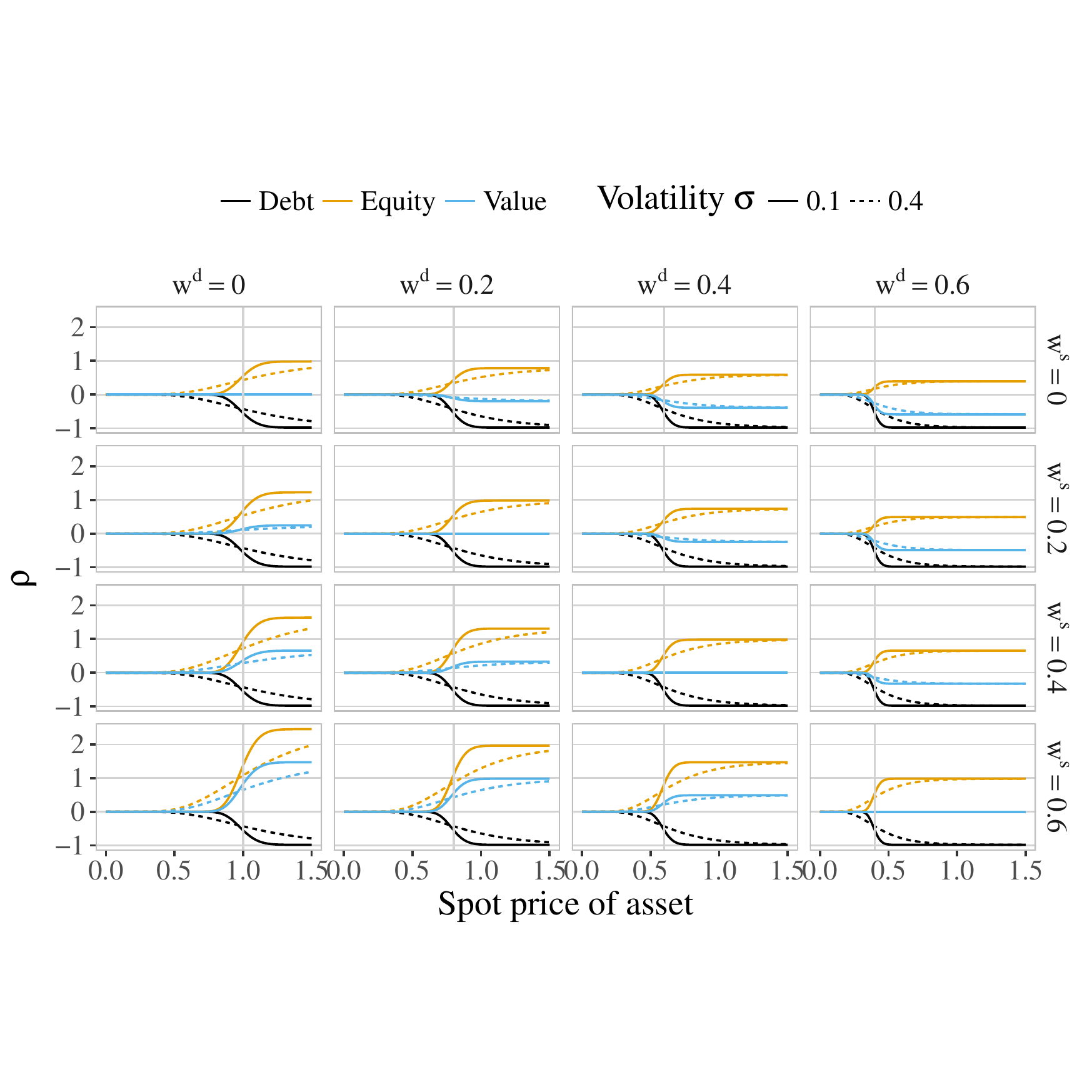} \\
  \vspace*{-1cm}
  \caption{Greeks $\Delta, \mathcal{V}, \rho$ and $\Theta$ for cross-holding
    fractions of $w^s, w^d \in \{0, 0.2, 0.4, 0.6\}$ and volatilities $\sigma = 0.1, 0.4$.}
  \label{fig:Greek_one}
\end{figure}

\clearpage

\subsection{Random networks example}

We consider again $n$ firms, each now holding a different
external asset $a_i$. This breaks the symmetry of the solution
considered above and is in general not analytically tractable. Indeed,
already in case of $n = 2$ firms the risk-neutral expectations in
\eq{val_Q} are intractable, even though the fixed point
$\v{x}^*(\v{A}_T)$ can be solved analytically
\citep{Suzuki2002,Karl2015}. Figure \ref{fig:two_firm} illustrates the
joint distribution of firm values arising from independent
log-normally distributed external assets. Parameters are choosen as in
figure 6 of \cite{Karl2015}, i.e. $r = 0, \tau = 1, a_0 = 1, \sigma = 1$
for the log normal distribution of $A_T$ and
$w_{12}^d = w_{21}^d = 0.95, w_{12}^s = w_{21}^s = 0, d_1 = d_2 =
11.3$
for the cross-holdings and nominal debt of both firms. Compared to the
value of external assets, firm values are increased considerably by
the debt cross-holdings and strongly distorted. Especially joint
defaults are non independent and accompanied by strongly correlated
firm values.
\begin{figure}[h]
  \centering
  \vspace*{-1cm}
  \includegraphics[width=0.9\textwidth]{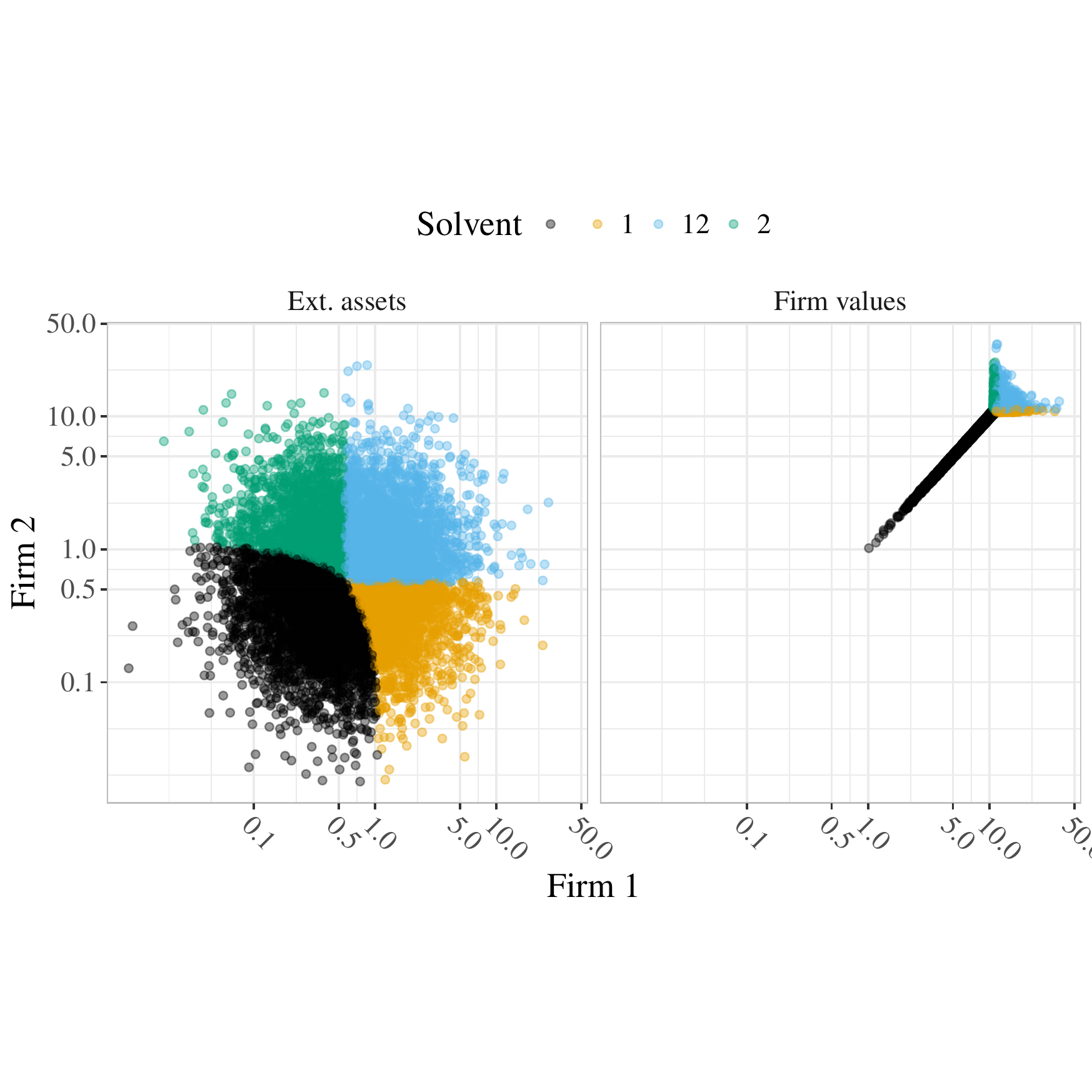}
  \vspace*{-1cm}
  \caption{External asset and firm values for two firms with
    parameters $r = 0, \tau = 1, a_0 = 1, \sigma = 1$ for the log normal
    distribution of $A_T$ and
    $w_{12}^d = w_{21}^d = 0.95, w_{12}^s = w_{21}^s = 0, d_1 = d_2 =
    11.3$ as in figure 6 of \cite{Karl2015}.}
  \label{fig:two_firm}
\end{figure}

Thus, in this section we resort to numerical methods in order to
compute network Greeks. In particular, we use Monte-Carlo integration
to approximate the risk neutral expectation of \eq{greek_net}.
Furthermore, as in \cite{Gai2010} we consider
debt cross-holdings only. The network of cross-holdings is generated
at random according to the Erd\H{o}s-R{\'e}nyi model, i.e. each firm
is connected to each counterparty with a fixed probability $p$. The
number of incoming $k^{\text{in}}$ and outgoing connections $k^{\text{out}}$ then
follow Poisson distributions, both with a mean
$\langle k \rangle = n p$. The Erd\H{o}s-R{\'e}nyi model exhibits a
phase transition at $\langle k \rangle = 1$ where a giant component,
i.e. connected subgraph infinite size in the thermodynamic limit
$n \to \infty$, of connected firms appears. Here, we adjust the
connection probabilities such that the average number of connected
counterparties covers this transition and varies between $0$ and
$5$. The actual weights of cross-holdings, i.e. $\v{M}^d$, are then
obtained by scaling the random adjacency matrices such that
$\sum_i M_{ij}^d = w^d \, \forall j = 1, \ldots, n$ whenever $j$ has
any outgoing connection. Otherwise, $\sum_i M_{ij}^d = k_j^{\text{out}} = 0$.
To illustrate the effect of different strength of debt cross-holdings
$w_d$ is varied between $0$ and $0.6$. Note that in the simulation
studies of \cite{Gai2010} the incoming connections, corresponding to
the investment portfolio of each firm, are scaled such that the same
total amount is held with each counterparty. In our case, this would
correspond to requiring that $M_{ij}^d = \frac{w^d}{k_i^{\text{in}}}$ which,
in general, cannot be ensured together with the above constraint on
the outgoing connections. The Sinkhorn-Knopp algorithm \citep{Sinkhorn1967,Idel2016}
allows to achieve fixed row and column sums simultaneously by
iteratively rescaling rows and columns. On matrices with some entries
exactly zero, as in our case of firms not holding debt from every
possible counterparty, such a rescaling is not always possible and the
algorithm does not necessarily converge. Thus, in simulations a rejection step is
required changing the support of the random network example. Yet, as
the results were almost unchanged, all simulations in this section are
based on the simpler version of scaling the outgoing connections only,
i.e. $\sum_i M_{ij}^d = w^d \; \forall j$. Note that this does not leave the incoming
connections unconstrained but fixes their average weight at $w^d$ as well as
$\frac{1}{n} \sum_j (\sum_i M_{ij}^d) = w^d = \frac{1}{n} \sum_i (\sum_j M_{ij}^d)$.

Each firm is assumed to hold a different external asset, whose values
at maturity are independent and log normally distributed. We fix the
risk neutral interest rate $r = 0$, time to maturity $\tau = 1$ and
volatility $\sigma = 0.4$, varying the initial asset prices between
$a_0 = 0.1$ and $2.5$. Note that by \eq{val_Q} this fixes the market
values of debt and equity of each firm. Thus, in contrast to
\cite{Gai2010}, we cannot fix the fraction of external assets
$\frac{a}{v}$ and capital ratios $\frac{s}{v}$ independently. Instead
they are derived from the market prices following from the chosen
parameters. Figure \ref{fig:sol_ER_market} shows the resulting values
together with the market prices of equity and debt as well as the
default probability. For each combination of average connectivity
$\langle k \rangle$, external asset value $a_0$ and strength of debt
cross-holdings $w^d$, we simulated 1000 networks of $n = 60$
firms. This size was chosen as it showed small finite size
effects, yet is small enough to efficiently compute network valuations
and partial derivatives. For each network, 700 normal random vector
$\v{Z}$ where drawn and used to compute market prices (\eq{val_Q}) and
network Greeks (\eq{greek_Q}). The fixed point
$\v{x}^*(\v{a}_T(\v{Z}))$ was found by Picard iteration of the map
$\v{g}$ defined in \eq{XOS_g} which was shown to work efficiently in
this model \citep{Hain2015}. Figures show the mean values over all random
networks, asset price draws. As on average all firms are symmetric
under the scaled Erd\H{o}s-R{\'e}nyi model, we also averaged over
firms, i.e. showing values for a typical firm.

\begin{figure}[h]
  \centering
  \includegraphics[width=0.9\textwidth]{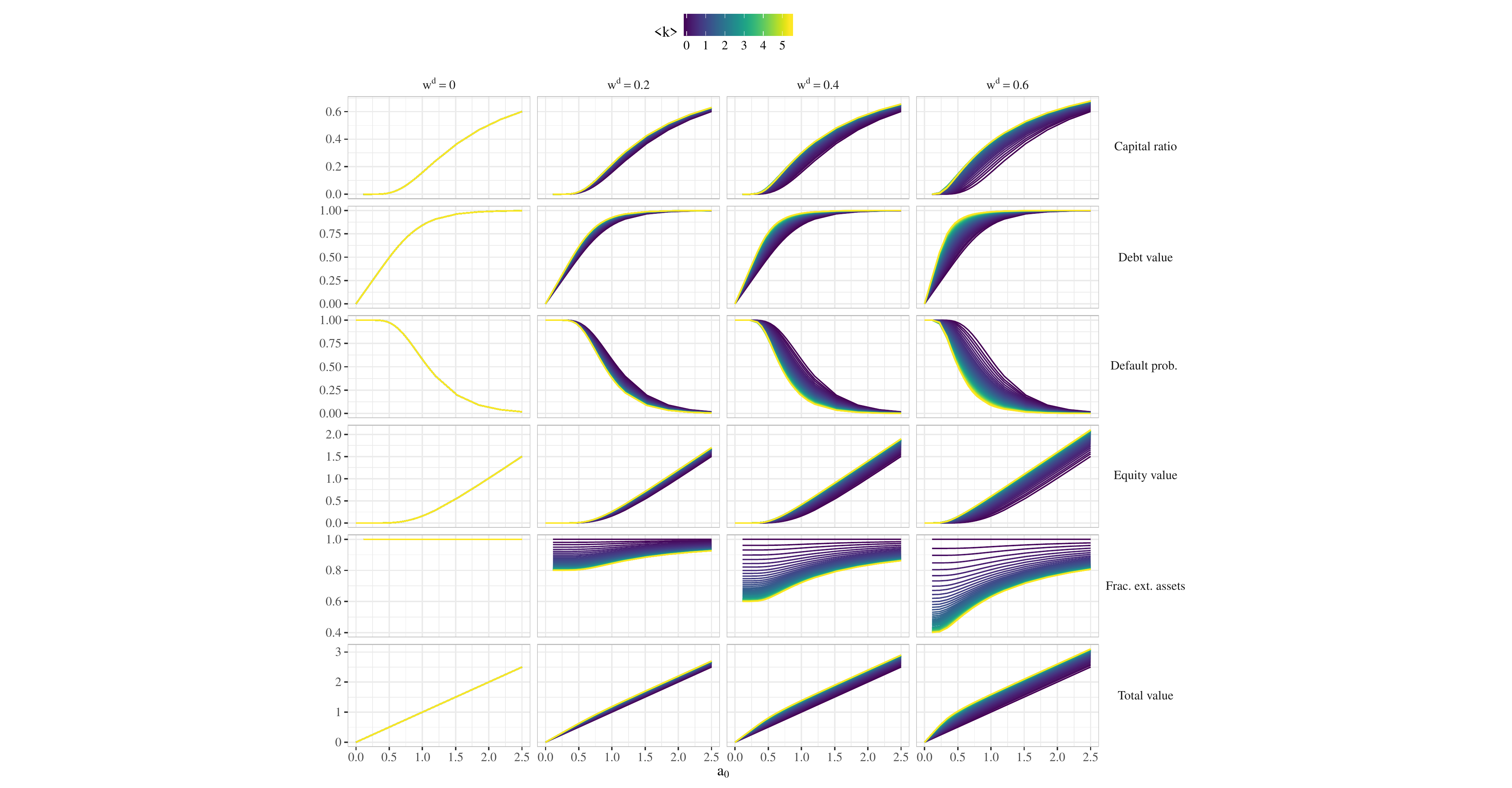}
  \caption{Market values of Erd\H{o}s-R{\'e}nyi random debt cross-holding
    networks of $60$ firms. At all strength $w^d$ of cross-holdings
    equity, debt and total value of firms increase
    with  diversification, i.e. average connectivity
    $\langle k \rangle$.
    For comparison with \cite{Gai2010} also
    the capital ratio, default probability and fraction of external
    assets are shown. Note that these are derived from market prices
    and thus endogenous in our model. Parameters are
    $r = 0, \tau = 1, \sigma = 0.4, \v{M}^s = \v{0}$ and
    $a_0, w_d, \langle k \rangle$ varying as denoted in the figure.}
  \label{fig:sol_ER_market}
\end{figure}
Figure \ref{fig:sol_ER_market} clearly illustrates the beneficial
effect of debt cross-holdings. For a fixed spot value of external
assets $a_0$, the market prices of equity and debt increase with the
number $\langle k \rangle$ and strength $w^d$ of
connections. Correspondingly, this diversification benefit leads to
reduced default probabilities. Despite the wide set of parameters
considered, few combinations lead to values of capital ratios of
$\approx 4\%$, default probabilities of a few percent and fractions of
external assets of $\approx 80\%$ which were considered realistic in
\cite{Gai2010}. At present, we are not sure if this is a fundamental
limitation of the model, i.e. missing important, qualitative features
of real markets, or could be remedied by more carefully chosen
parameters.

\begin{figure}[h]
  \centering
  \includegraphics[width=0.48\textwidth]{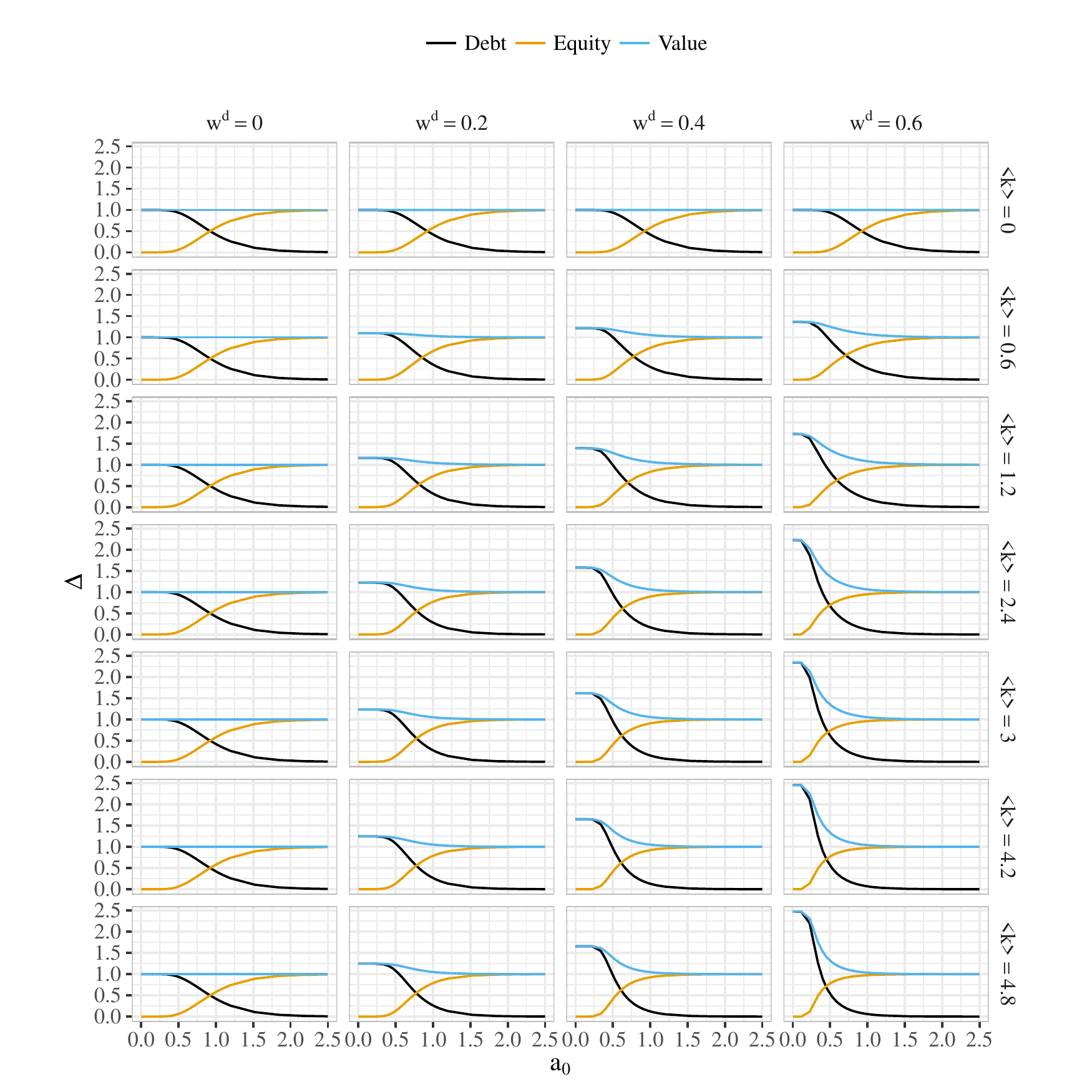}
  \includegraphics[width=0.48\textwidth]{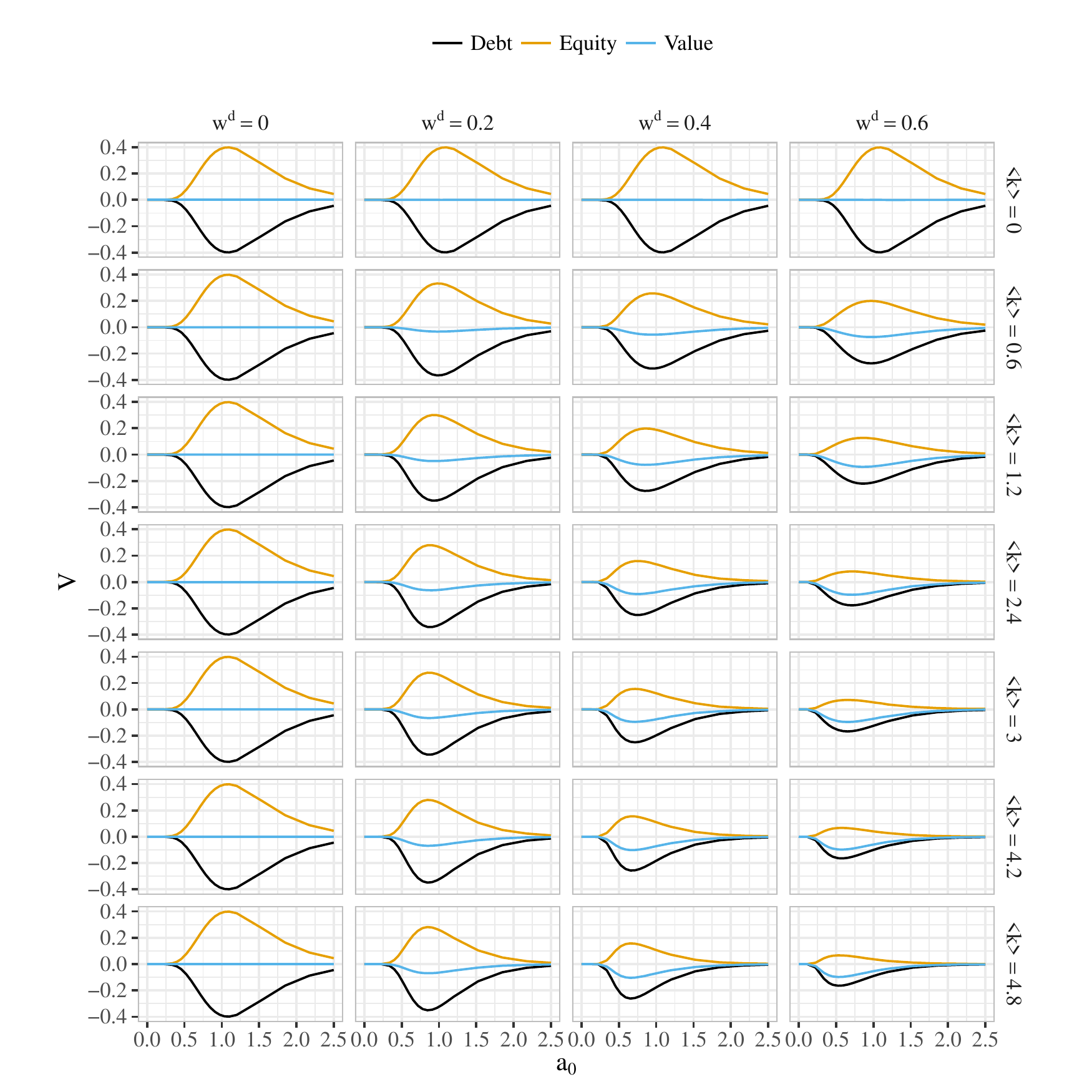} \\
  \includegraphics[width=0.48\textwidth]{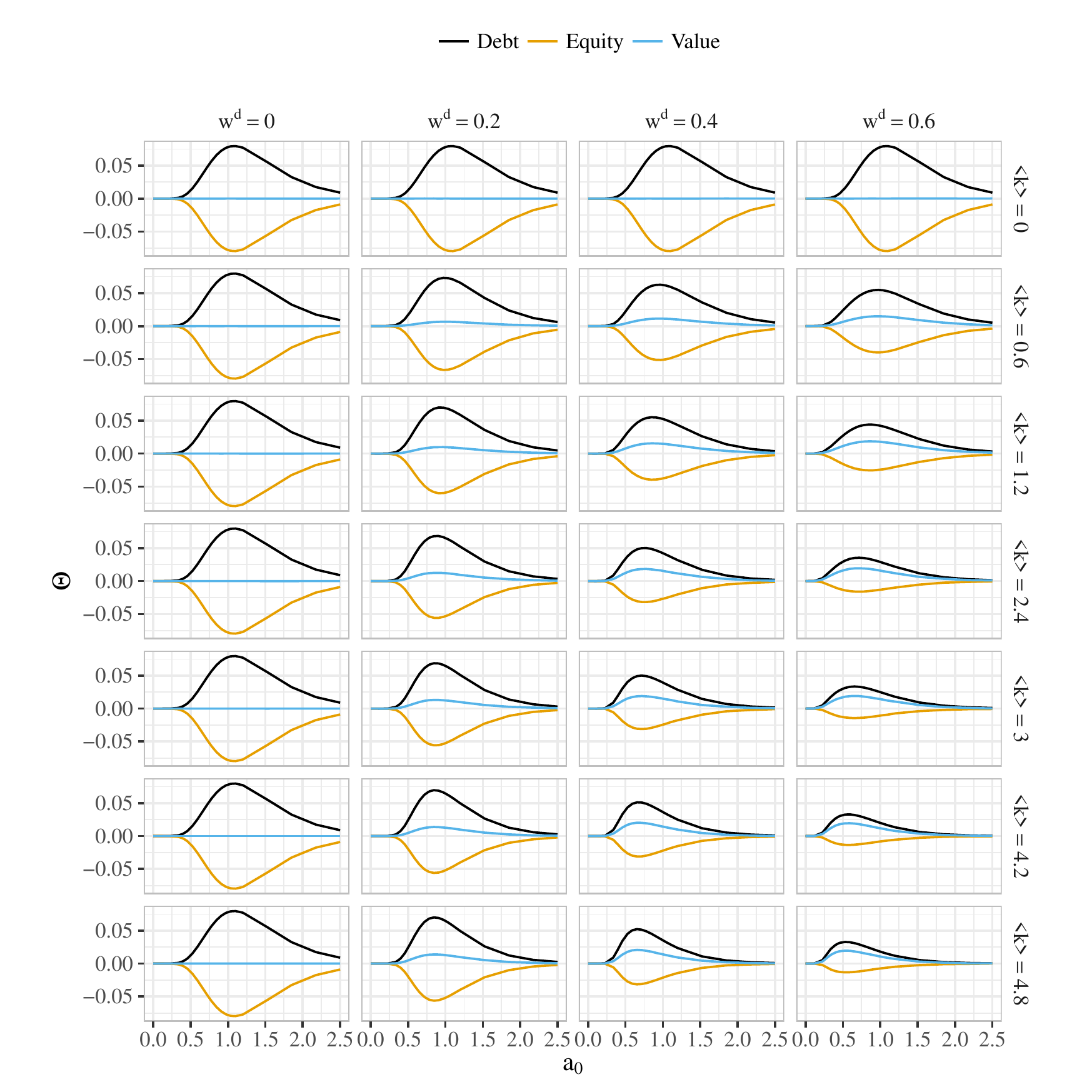}
  \includegraphics[width=0.48\textwidth]{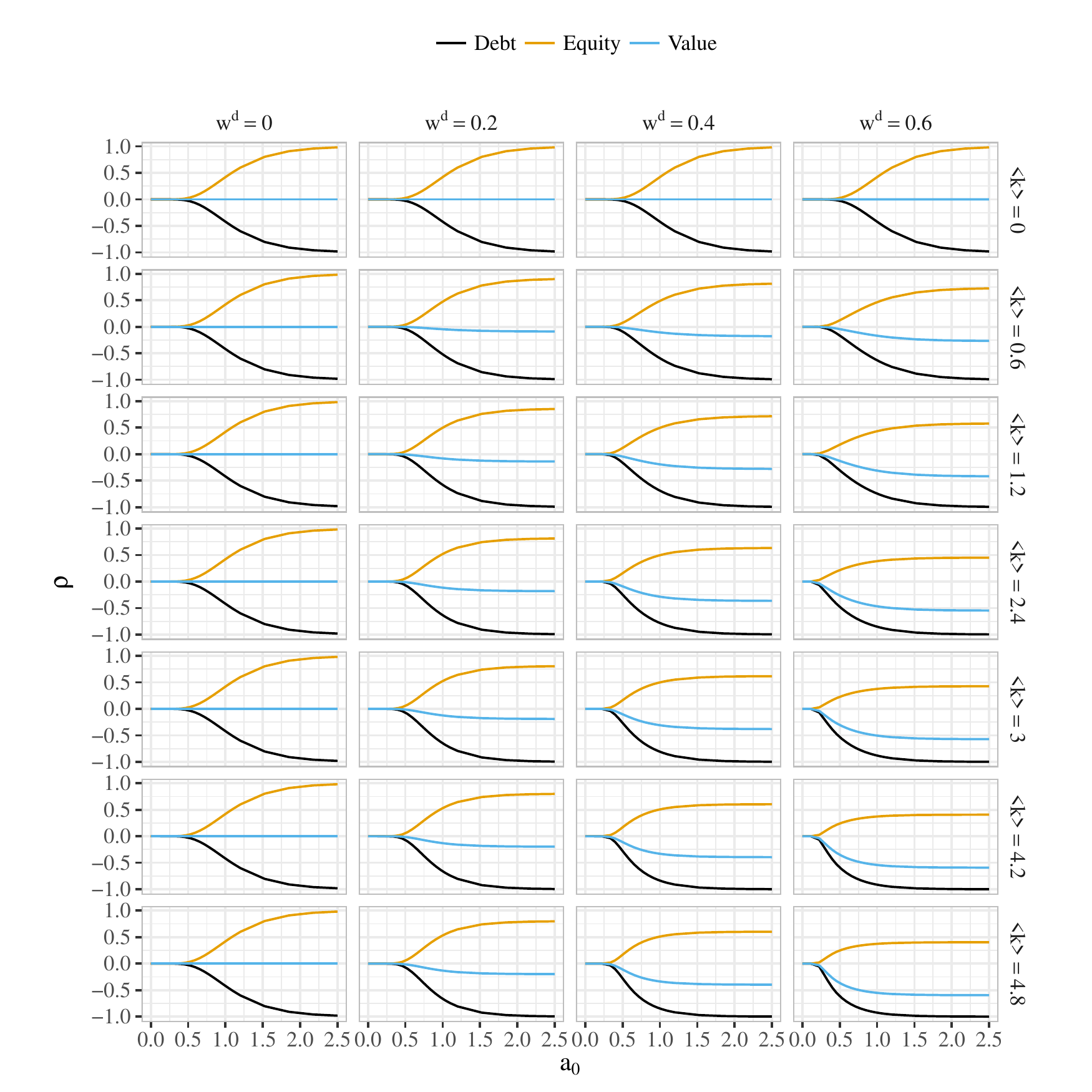}
  \caption{Greeks $\Delta, \mathcal{V}, \Theta$ and $\rho$ for
    Erd\H{o}s-R{\'e}nyi random cross-holding networks depending on the
    spot price $a_0$ of external assets. In all cases the average
    impact of each firm on the equity, debt and value of all other firms is
    shown. Parameters were chosen as in figure
    \ref{fig:sol_ER_market}.}
  \label{fig:sol_ER_Greek}
\end{figure}
Next, in order to study the impact of network parameters on systemic
risk, we compute the network Greeks. Fig.~\ref{fig:sol_ER_Greek}
shows the results for some
selected connectivities and should be compared to figure
\ref{fig:Greek_one}. Again, as firms are on average symmetric in our
model, we show the average total impact on all firms, e.g.
$\widehat{\Delta}^{\text{Total}} = \v{1}^T \v{\Delta} \frac{1}{n} \v{1}$
which is denoted as ``Value $\Delta$'' in the figure. As in the
case of the strictly symmetric solution, debt and equity react
differently to risk factors. $\widehat{\Delta}^{\text{Total}}$ is strongly amplified
at low external asset values by debt cross-holdings, increasing with
both the number and strength of connections. With the chosen
parameters the amplification factor is bounded by $\frac{1}{1 - w^d}$
due to the employed scaling of cross-holding positions.
Overall, the Greeks exhibit a similar behavior as in the strictly symmetric case.
Interestingly, the new parameter of average connectivity $\langle k \rangle$
tends to decrease the sensitivities to most
risk factors except for $\Delta$ which is strongly amplified.
The dampening effect is stronger on equity though such that the firms values
are no longer $\mathcal{V}, \Theta$ or $\rho$-neutral.

\begin{figure}[h]
  \centering
  \vspace*{-1cm}
  \includegraphics[width=0.95\textwidth]{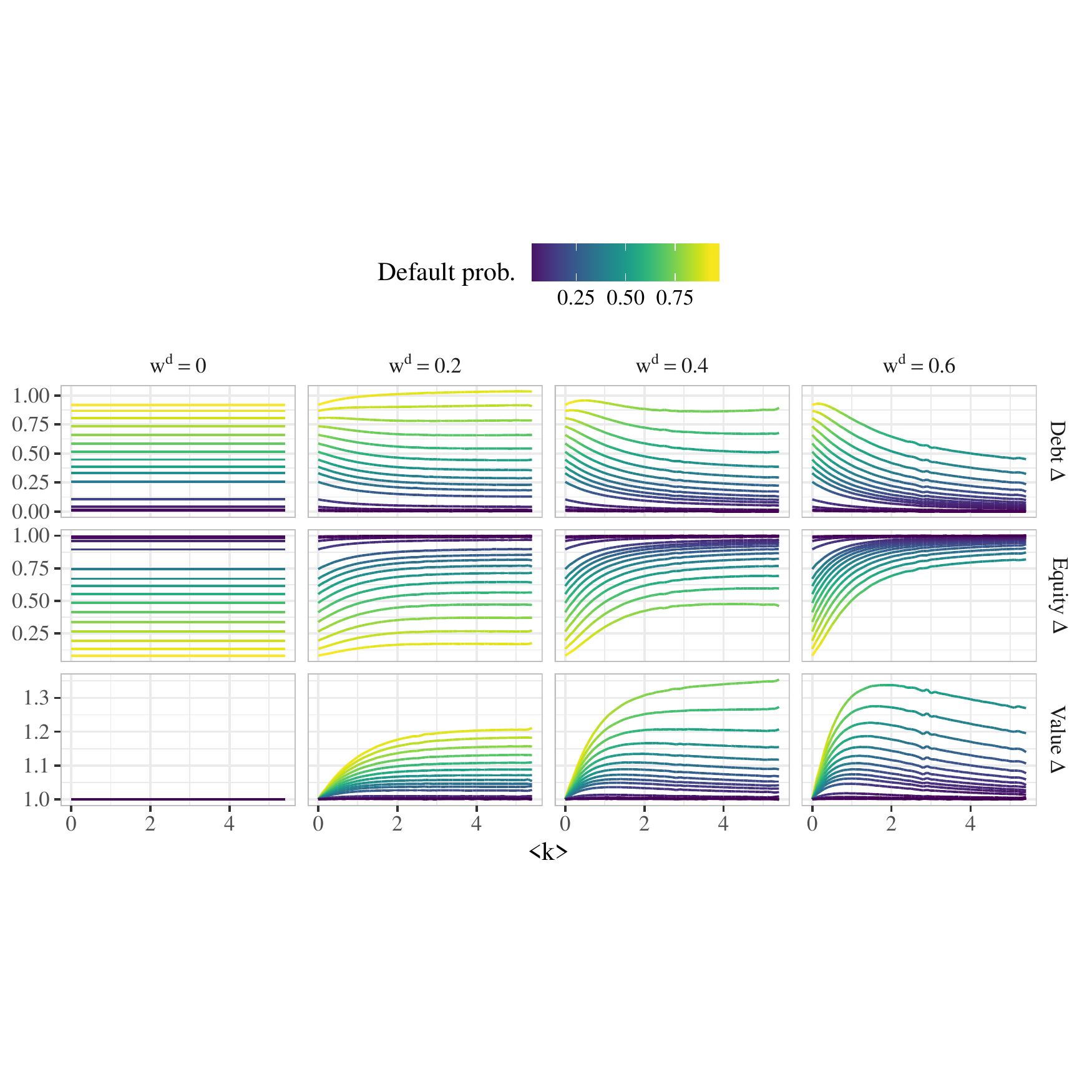}
  \vspace*{-1cm}
  \caption{$\widehat{\Delta}^{\text{Total}}$ depending on the average
    number $\langle k \rangle$ of counterparties. Especially with
    large cross-holdings of debt $w_d$ the impact on the firm value
    exhibits a contagion window. This arises from the interplay
    between the amplification due to higher connectivity and the
    diversification benefit lowering the default probability.
    To illustrate this effect, the different
    lines -- corresponding to different asset spot prices $a_0$ -- are
    colored by default probability.
  }
  \label{fig:sol_ER_window}
\end{figure}
Now, we focus on $\Delta$ which is readily interpreted as the
(first-order) impact of asset price shocks. Thus, it should be
comparable to contagion arising from defaults as in the model of
\cite{Gai2010}. In this model, the appearance of a contagion window
was noted: Contagion starts to spread at the phase transition of the
random network, i.e. when firms are sufficiently connected such that
large connected clusters appear, and stops when the impact from single
counterparties becomes too small, i.e. when firms are sufficiently
diversified. Here, we obtain similar results as shown in figure
\ref{fig:sol_ER_window}. Especially at larger strength of debt
cross-holdings $w^d$, we observe that the total impact of an asset
price shock on all firms (Value
$\Delta = \widehat{\Delta}^{\text{Total}}$) increases and then
decreases with increasing connectivity. There is no clearly defined
contagion window though. Instead, the phase transition of the
underlying random network is invisible with contagion -- as quantified
by $\widehat{\Delta}^{\text{Total}}$ -- starting to rise already at
$\langle k \rangle > 0$. Furthermore, diversification can reduce
contagion but never prevent it completely. Overall, the appearance of
a contagion maximum, akin to a window, is driven by an intricate
interplay in the amplification driven by \eq{net_inv}: While
increasing connectivity in $\v{M}^d$ would lead to higher
amplification, at the same time the default probability drops and
reduces amplification. Interestingly, both effects balance each other
in a way that the critical connectivity of the phase transition is
invisible, i.e. initially any connection quickly increases
contagion. On the other end, even at full connectivity contagion does
not vanish as market prices react to the potential contagion that
would arise if firms default at maturity. I.e. as long as the default
probability is non-zero, market prices reflect the potential credit
risk and thus react to asset price shocks. This is also the reason
that the diversification benefit is limited by the extend to which the
probability of default is reduced. Further studies are certainly
necessary to investigate and understand the properties of $\Delta$ and
the other Greeks when applied in a network context.

\clearpage

\section{Discussion}
\label{sec:discuss}

We have investigated network valuation models which extend structural
credit risk models to a multi-firm setup. Taking the resulting
connection with derivative pricing seriously, we have shown how to
compute network Greeks. Our solution is analytic to the point where
risk-neutral expectations taking ex-post values at maturity to ex-ante
market prices have to be evaluated. Nevertheless, we have computed the
network Greeks in large random cross-holding networks via Monte-Carlo
sampling. We believe that our model has great potential and many
merits for investigating and understanding systemic risk. First and
foremost it is firmly routed in established theories of asset pricing,
providing a sound and principled approach for systemic risk
analysis. From this perspective, we advocate the view that the
valuation of financial contracts and systemic risk should not be
studied as separate subjects, which has also been expressed by
Fischer. Furthermore, the following interesting observations have been
derived from the model: The effectiveness of contagion changes based
on the distance to default of firms. Especially debt cross-holdings
are mostly invisible when firms are running strong, yet amplify asset
price shocks when firms become distressed. Thus, effective management
of systemic risk cannot be based on current market prices alone but
needs to include crisis scenarios. Second, network Greeks provide a
principled way for quantifying risk in financial networks, connecting
systemic risk research with well established risk management
practices. In particular, we clarified the relation between our
network $\Delta$ and the threat index proposed by
\cite{Demange2018}. We believe that $\Delta$ is preferable as it
quantifies the impact of asset price shocks at market, i.e. ex-ante,
values and considers the impact on the total firms value, instead of
debt repayments alone. Third, in the considered model, risk is
redistributed differently between firms within the network and
external investors funding them. From their outside perspective no
amplification takes place with risk merely redistributed without much
sharing benefits between them. Finally, our framework is very general
and applies almost unchanged to extensions of the considered model. By
including equity cross-holdings the model is already more general than
most studies of systemic risk focusing on default contagion,
i.e. arising from debt cross-holdings. Several interesting questions
suggest themselves for future research, e.g. cross-holding of
contracts with different seniorities \citep{Fischer2014} or roll-over
and liquidity risk as studied by \cite{He2012} in the case of single
firms. While some of these extensions appear immediate, the inclusion
of dead-weight losses at default \cite{Battiston2012} poses some
challenges as the valuation becomes non-continuous.

\subsection*{Acknowledgement}

Nils Bertschinger thanks Dr. h.c. Maucher for funding his position.

\appendix

\section{Proof of hypothesis \ref{hyp:mono}}
\label{app:hyp}

\begin{proof}
From corollary \ref{coro:fix_diff} and \eq{g_diff_mat} we obtain
\begin{align}
  \left[ \begin{array}{c} \v{u}^s \\ \v{u}^d \end{array} \right] 
&= \left[ \v{I}_{2 n \times 2 n} - \frac{\partial}{\partial \v{x}} \v{g}(\v{a}, \v{x}) \right]^{-1}
  \left[ \begin{array}{c} \diag(\v{\xi}) \\ \diag(\v{1}_n - \v{\xi}) \end{array} \right] \\[1.5ex]
\Leftrightarrow \left[ \v{I}_{2 n \times 2 n} - \frac{\partial}{\partial \v{x}} \v{g}(\v{a}, \v{x}) \right]
  \left[ \begin{array}{c} \v{u}^s \\ \v{u}^d \end{array} \right] 
&= \left[ \begin{array}{c} \diag(\v{\xi}) \\ \diag(\v{1}_n - \v{\xi}) \end{array} \right] \\[1.5ex]
  \label{eq:app_hyp:fix}
  \left[ \begin{array}{c} \v{u}^s \\ \v{u}^d \end{array} \right] 
&= \left[ \begin{array}{c} \diag(\v{\xi}) \\ \diag(\v{1}_n - \v{\xi}) \end{array} \right]
  + \left[
    \begin{array}{cc}
      \diag(\v{\xi}) \v{M}^s & \diag(\v{\xi}) \v{M}^d \\
      \diag(\v{1}_n - \v{\xi}) \v{M}^s & \diag(\v{1}_n - \v{\xi}) \v{M}^d
    \end{array}
    \right] \left[ \begin{array}{c} \v{u}^s \\ \v{u}^d \end{array} \right] \, .
\end{align}
Thus, the partial drivatives are solutions of the fixed point
$\left[ \begin{array}{c} \v{u}^s \\ \v{u}^d \end{array} \right] =
T_{\v{\xi}} \left[ \begin{array}{c} \v{u}^s \\ \v{u}^d \end{array}
\right]$
with $T_{\v{\xi}}$ denoting the map on the right hand side of
\eq{app_hyp:fix}. We now show that (i) $T_{\v{\xi}}$ is monotonically
increasing in both $\v{u}^s$ and $\v{u}^d$, and (ii) $T_{\v{\xi}}$ is
a contraction.
\begin{enumerate}[(i)]
\item Consider
  $\left[ \begin{array}{c} \v{u'}^s \\ \v{u'}^d \end{array} \right]
  \geq \left[ \begin{array}{c} \v{u}^s \\ \v{u}^d \end{array}
  \right]$. Then
  \begin{align}
    T_{\v{\xi}} \left( \left[ \begin{array}{c} \v{u'}^s \\ \v{u'}^d \end{array} \right]
    - \left[ \begin{array}{c} \v{u}^s \\ \v{u}^d \end{array} \right] \right)
    &= \left[
    \begin{array}{cc}
      \diag(\v{\xi}) \v{M}^s (\v{u'}^s - \v{u}^s) & \diag(\v{\xi}) \v{M}^d (\v{u'}^d - \v{u}^d) \\
      \diag(\v{1}_n - \v{\xi}) \v{M}^s (\v{u'}^s - \v{u}^s) & \diag(\v{1}_n - \v{\xi}) \v{M}^d (\v{u'}^d - \v{u}^d)
    \end{array}
    \right] \geq \left[ \begin{array}{c} \v{0} \\ \v{0} \end{array} \right]
  \end{align}
  where the last inequality follows from
  $\v{u'}^{s,d} - \v{u}^{s,d} \geq \v{0}, \v{\xi} \in \mathbb{R}^n$
  and assumption \ref{assu_1} stating that $M^{s,d} \geq \v{0}$.
\item To show that $T_{\v{\xi}}$ is a contraction, we compute
  \begin{align}
    \norm{T_{\v{\xi}} \left[ \begin{array}{c} \v{u'}^s \\ \v{u'}^d \end{array} \right]
    - T_{\v{\xi}} \left[ \begin{array}{c} \v{u}^s \\ \v{u}^d \end{array} \right]}
    &= \norm{ \left[
    \begin{array}{cc}
      \diag(\v{\xi}) \v{M}^s & \diag(\v{\xi}) \v{M}^d \\
      \diag(\v{1}_n - \v{\xi}) \v{M}^s & \diag(\v{1}_n - \v{\xi}) \v{M}^d
    \end{array} \right]
                                         \left( \left[ \begin{array}{c} \v{u'}^s \\ \v{u'}^d \end{array} \right] 
    - \left[ \begin{array}{c} \v{u}^s \\ \v{u}^d \end{array} \right] \right)
    } \\[1.5ex]
    &\leq \norm{\left[
    \begin{array}{cc}
      \diag(\v{\xi}) \v{M}^s & \diag(\v{\xi}) \v{M}^d \\
      \diag(\v{1}_n - \v{\xi}) \v{M}^s & \diag(\v{1}_n - \v{\xi}) \v{M}^d
    \end{array} \right]} \norm{\left[ \begin{array}{c} \v{u'}^s \\ \v{u'}^d \end{array} \right] 
    - \left[ \begin{array}{c} \v{u}^s \\ \v{u}^d \end{array} \right]} \\[1.5ex]
    &\leq \lambda \norm{\left[ \begin{array}{c} \v{u'}^s \\ \v{u'}^d \end{array} \right] 
    - \left[ \begin{array}{c} \v{u}^s \\ \v{u}^d \end{array} \right]}
  \end{align}
  where $\lambda < 1$ exists when assuming the slightly stronger
  requirement that $\sum_i M^{s,d}_{i,j} < 1$ for all
  $j = 1, \ldots, n$. The contraction then follows as all columns of $\left[
    \begin{array}{cc}
      \diag(\v{\xi}) \v{M}^s & \diag(\v{\xi}) \v{M}^d \\
      \diag(\v{1}_n - \v{\xi}) \v{M}^s & \diag(\v{1}_n - \v{\xi}) \v{M}^d
    \end{array} \right]$ sum to less than one bounding its $\norm{\cdot}$ below one as well.
\end{enumerate}
By (ii) we know from Banach's fixed point theorem that the iteration
$\v{u}_{n+1} = T_{\v{\xi}} \v{u}_n$ converges to the unique fixed
point $\v{u}^*$ from any initial condition $\v{u}_0$. Further, from
(i) the convergence is strictly from above or below in both parts
$\v{u}^{s}$ and $\v{u}^d$ when $\v{u}^{s,d} \geq (\v{u}^{s,d})^*$ or
$\v{u}^{s,d} \leq (\v{u}^{s,d})^*$ respectively. Now denote the solution with $\v{\xi}$ with
$\v{u}_{\v{\xi}}$, i.e. $\v{u}_{\v{\xi}} = T_{\v{\xi}} \v{u}_{\v{\xi}}$, and consider
$\v{\xi'} \geq \v{\xi}$, i.e. with more firms being solvent. Then,
\begin{align}
  T_{\v{\xi'}} \left[ \begin{array}{c} \v{u}^s_{\v{\xi}} \\ \v{u}^d_{\v{\xi}} \end{array} \right]
  &= \left[
    \begin{array}{c}
      \diag(\v{\xi'}) + \diag(\v{\xi'}) \v{M}^s \v{u}^s_{\v{\xi}} + \diag(\v{\xi'}) \v{M}^d \v{u}^d_{\v{\xi}} \\
      \diag(\v{1}_n - \v{\xi'}) + \diag(\v{1}_n - \v{\xi'}) \v{M}^s \v{u}^s_{\v{\xi}} + \diag(\v{1}_n - \v{\xi'}) \v{M}^d \v{u}^d_{\v{\xi}}
    \end{array}
    \right] \\[1.5ex]
  &{\begin{array}{c} \geq \\ \leq \end{array}}
  \left[
    \begin{array}{c}
      \diag(\v{\xi}) + \diag(\v{\xi}) \v{M}^s \v{u}^s_{\v{\xi}} + \diag(\v{\xi}) \v{M}^d \v{u}^d_{\v{\xi}} \\
      \diag(\v{1}_n - \v{\xi}) + \diag(\v{1}_n - \v{\xi}) \v{M}^s \v{u}^s_{\v{\xi}} + \diag(\v{1}_n - \v{\xi}) \v{M}^d \v{u}^d_{\v{\xi}}
    \end{array}
    \right]
\end{align}
and thus by monotone convergence
$\v{u}^s_{\v{\xi'}} \geq \v{u}^s_{\v{\xi}}$ and
$\v{u}^d_{\v{\xi'}} \leq \v{u}^d_{\v{\xi}}$
which is the desired result.
\end{proof}

\section{Relation with threat index}
\label{app:Demange}

Here, we consider debt cross-holdings only, i.e. $\v{M}^s =
\v{0}$. Then, the model in \eq{XOS_vec} simplifies to
\begin{align}
  \v{s} &= \max\left\{\v{0}, \v{a} + \v{M}^d \v{r} - \v{d} \right\}, \\
  \v{r} &= \min\left\{\v{d}, \v{a} + \v{M}^d \v{r} \right\}
\end{align}
and as the right-hand side does not depend on $\v{s}$ it suffices to
consider the recovery values of debt $\v{r}$. Table \ref{tab:Demange}
provides the translation between our notation and the terminology of
\cite{Demange2018}.
\begin{table}[h]
  \centering
  \begin{tabular}{l|c|c|l}
    \multicolumn{2}{c|}{Demange model} & \multicolumn{2}{c}{Our model} \\ \hline
    Interpretation & Notation & Notation & Interpretation \\ \hline
    Total liabilities & $l^*_i = \sum_{j} l_{ij}$ & $d_i$ & Nominal debt \\
    Clearing ratio & $\theta_i \in [0, 1]$ & $\frac{r_i}{d_i}$ & NA \\
    Repayment & $\theta_i l^*_i$ & $r_i$ & Recovery value of debt \\
    Operating cash flow & $z_i$ & $a_i$ & External asset value \\
    Total cash fow & $a_i(\v{\theta}) = z_i + \sum_j \theta_j l_{ji}$ & $v_i = a_i + \sum_j M^d_{ij} r_j$ & Firm value \\
    Liabilities share & $\Pi_{ij} = \frac{l_{ij}}{l^*_i}$ & $M^d_{ji}$ & Investment fraction \\
    Default set & $i \in D$ & $\xi_i = 0$ & Solvency vector
  \end{tabular}
  \caption{Translation between our notation and the terminology of \cite{Demange2018}.}
  \label{tab:Demange}
\end{table}

Definition 1 by \cite{Demange2018} now states that first
$a_i(\v{\theta}) \geq z_i$ and second $\theta_i = 1$ or
$a_i(\v{\theta}) = \theta_i l^*_i$ hold. Translating into our
notation this means that $v_i \geq a_i$ and that the recovery value of
debt $r_i = \theta_i l^*_i$ either equals $d_i$ or $v_i$ if the firm
is insolvent. Thus, the constraints on the repayment ratio translate into
\begin{align}
  r_i &= \min\{d_i, v_i\} = \min\{d_i, a_i + \sum_j M_{ij}^d r_j\}
\end{align}
as in our model. Matching the definitions of $a_i(\v{\theta})$ and
$v_i$ we identify $\v{\Pi}$ with $(\v{M}^d)^T$.

Demange now considers the aggregate value of repayments $V$, i.e.
$V = \sum_i \theta_i l^*_i$ which we identify with $V = \sum_i
r_i$.
Thus, we can express $V$ as
$\v{0}^T \v{s} + \v{1}^T \v{r} = (\v{0}; \v{1})^T \v{x}$ where $\v{0}$
and $\v{1}$ are $n$-dimensional vectors of zeros and ones
respectively. Using corollary \ref{coro:fix_diff}, we can compute the
partial derivative of $V$ with respect to external asset values as
\begin{align}
  \frac{\partial V}{\partial \v{a}}
  &= \frac{\partial}{\partial \v{a}} (\v{0}; \v{1})^T \v{x}(\v{a}) \\
  & = (\v{0}; \v{1})^T \left[ \v{I}_{2 n \times 2 n} - \frac{\partial}{\partial \v{x}} \v{g}(\v{a}, \v{x}) \right]^{-1}
      \left[ \begin{array}{c} \diag(\v{\xi}) \\ \\ \diag(\v{1}_n - \v{\xi}) \end{array} \right] \\
  &= (\v{0}; \v{1})^T \left( \v{I}_{2 n \times 2 n} - \diag(\v{\xi}; \v{1}_n - \v{\xi})
    \left[
    \begin{array}{ccc}
      \v{0} & & \v{M}^d \\
      & & \\
      \v{0} &  & \v{M}^d
    \end{array}
                               \right] \right)^{-1}
      \left[ \begin{array}{c} \diag(\v{\xi}) \\ \\ \diag(\v{1}_n - \v{\xi}) \end{array} \right] \\
  &= \left(\v{0}^T, \v{1}^T (\v{I}_{n \times n} - \diag(\v{1}_n - \v{\xi}) \v{M}^d)^{-1} \right) 
    \left[ \begin{array}{c} \diag(\v{\xi}) \\ \\ \diag(\v{1}_n - \v{\xi}) \end{array} \right] \\
  &= \v{1}^T (\v{I}_{n \times n} - \diag(\v{1}_n - \v{\xi}) \v{M}^d)^{-1} \diag(\v{1}_n - \v{\xi}) \, .
\end{align}
Thus, denoting this row vector of partial derivatives with $\v{\mu}^T$
we find that
\begin{align}
  \v{\mu} &= \diag(\v{1}_n - \v{\xi})^T (\v{I}^T_{n \times n} - (\v{M}^d)^T \diag(\v{1}_n - \v{\xi})^T)^{-1} \v{1} \\
  &= \diag(\v{1}_n - \v{\xi}) (\v{I}^T_{n \times n} - \v{\Pi} \diag(\v{1}_n - \v{\xi}))^{-1} \v{1}
\end{align}
matches the definition proposed by \cite{Demange2018}.

\section{Solution for symmetric example}

Here, we compute the market values of debt and equity in the symmetric
example assuming that the single external asset $A_t$ follows a
geometric Brownian motion. In this case,
\begin{align}
  A_t &= a_0 e^{(r - \frac{1}{2} \sigma^2) t + \sigma \sqrt{t} Z}
\end{align}
where $Z \sim \mathcal{N}(0, 1)$, i.e. $A_t$ has a log normal
distribution with parameters
$\mu_A = \ln a_0 + (r - \frac{1}{2} \sigma^2) t$ and
$\sigma_A^2 = \sigma^2 t$.

\subsection{Equity and debt value}
\label{app:sol_one}

By \eq{val_Q} the market value of equity is then
\begin{align}
  s_t &= \E^Q_t[e^{-r \tau} s^*(A_T)] \\
  &= e^{- r \tau} \E^Q_t\left[ \E^Q_t[ s(A_T) | \xi ] \right] \\
  &= e^{- r \tau} \left( 0 \mathbb{P}^Q_t[\xi = 0] + \E^Q_t\left[ \frac{A^T - (1 - w^d) d}{1 - w^s} | \xi_T = 1 \right] \mathbb{P}^Q_t[\xi_T = 1] \right) \\
  &= e^{- r \tau} \left( \frac{1}{1 - w^s} \E^Q_t[ A_T | \xi_T = 1 ] - \frac{1 - w^d}{1 - w^s} d \right) \mathbb{P}^Q_t[\xi_T = 1] \, .
\end{align}
Defining $d_{\pm}$ as in \eq{sol_d_pm} the risk-neutral solvency
probability can be written as
\begin{align}
  \mathbb{P}^Q_t[\xi_T = 1] &= \mathbb{P}^Q_t[A_T \geq (1 - w^d) d] \\
  &= 1 - \Phi(- d_{-}) = \Phi(d_{-}) \, .
\end{align}
Similarly, the conditional expectation derived from the log normal
distribution for $A_T$ reads as
\begin{align}
  \E^Q_t[ A_T | \xi_T = 1 ] &= e^{\ln a_t + r \tau} \frac{\Phi(d_{+})}
                              { \mathbb{P}^Q_t[\xi_T = 1] }
\end{align}
Similarly the market value of debt is computed as
\begin{align}
  r_t &= e^{- r \tau} \left( \frac{1}{1 - w^d} \E^Q_t[A_T | \xi_T = 0] (1 - \mathbb{P}^Q_t[\xi_T = 1]) + d \; \mathbb{P}^Q_t[\xi_T = 1] \right)
\end{align}
with the conditional expectation
\begin{align}
  \E^Q_t[A_T | \xi_T = 0] &= e^{\ln a_t + r \tau} \frac{\Phi(- d_{+})}{ 1 - \mathbb{P}^Q_t[\xi_T = 1] } \, .
\end{align}
Finally, combining all equations we obtain the solution as given in
\eq{sol_one}.

\subsection{Greeks}
\label{app:sol_Greeks}

To compute the Greeks in this example, we first compute the partial
derivatives of the solution $\v{x}^* = (s^*, r^*)^T$ by corollary
\ref{coro:fix_diff}. Further, since all firms are symmetric and thus
either all solvent or all insolvent, from equations
\noeq{diff_solvent} and \noeq{diff_insolvent} we obtain:
\begin{align}
  \frac{\partial s^*}{\partial a}
  &= \left\{ \begin{array}{cl} 0 & \mbox{ if } \xi = 0 \\
               \frac{1}{1 - w^s} & \mbox{ if } \xi = 1 \end{array} \right. \\
  \frac{\partial r^*}{\partial a}
  &= \left\{ \begin{array}{cl} \frac{1}{1 - w^d} & \mbox{ if } \xi = 0 \\
               0 & \mbox{ if } \xi = 1 \end{array} \right. \, .
\end{align}
Summing together equity and debt and taking expectations, the systemic
risk index $\v{\pi}$ is given as
\begin{align}
  \v{\pi} &= E_t^Q[ \frac{\partial s^*}{\partial A_T} + \frac{\partial r^*}{\partial A_T} ] - 1 \\
          &= \frac{1}{1 - w^s} \mathbb{P}^Q_t[\xi = 1] + \frac{1}{1 - w^d} (1 - \mathbb{P}^Q_t[\xi = 1]) - 1 \, .
\end{align}
Next, we need the partial derivatives of $A_{T}$ with respect to
the parameters $\v{\theta} = (a_t, \sigma, r, \tau)$ of interest:
\begin{align}
  \frac{A_{T}}{\partial a_t}
  &= e^{(r - \frac{1}{2} \sigma^2) \tau + \sigma \sqrt{\tau} Z} = \frac{A_{T}}{a_t} \\
  \frac{A_{T}}{\partial \sigma}
  &= A_{T} \left( - \sigma \tau + \sqrt{\tau} Z \right) \\
  \frac{A_{T}}{\partial r}
  &= A_{T} \tau \\
  \frac{A_{T}}{\partial \tau}
  &= A_{T} \left( r - \frac{1}{2} \sigma^2 + \frac{1}{2} \frac{\sigma}{\sqrt{\tau}} Z \right) \, .
\end{align}
From \eq{greek_net} we finally obtain the Greeks:
\begin{align}
  \frac{\partial}{\partial a_t} \left( \begin{array}{c} s_t \\ r_t \end{array} \right)
  &= \frac{e^{- r \tau}}{a_t} \left( \begin{array}{c} \frac{1}{1 - w^s} \mathbb{P}^Q_t[\xi = 1] \E^Q_t[ A_T | \xi_T = 1 ] \\
                 \frac{1}{1 - w^d} \mathbb{P}^Q_t[\xi = 0] \E^Q_t[ A_T | \xi_T = 0 ] \end{array} \right) \\
  \frac{\partial}{\partial \sigma} \left( \begin{array}{c} s_t \\ r_t \end{array} \right)
  &= e^{- r \tau} \left( \begin{array}{c} \frac{1}{1 - w^s} \mathbb{P}^Q_t[\xi = 1] ( - \sigma \tau \E^Q_t[ A_T | \xi_T = 1 ] + \sqrt{\tau} \E^Q_t[ A_T Z | \xi_T = 1 ]) \\
                           \frac{1}{1 - w^d} \mathbb{P}^Q_t[\xi = 0] ( - \sigma \tau \E^Q_t[ A_T | \xi_T = 0 ] + \sqrt{\tau} \E^Q_t[ A_T Z | \xi_T = 0 ]) \end{array} \right) \\
  \frac{\partial}{\partial r} \left( \begin{array}{c} s_t \\ r_t \end{array} \right)
  &= - \tau e^{- r \tau} \left( \begin{array}{c} \left( \frac{1}{1 - w^s} \E^Q_t[ A_T | \xi_T = 1 ] - \frac{1 - w^d}{1 - w^s} d \right) \mathbb{P}^Q_t[\xi = 1] \\
  \frac{1}{1 - w^d} \E^Q_t[A_T | \xi_T = 0] \mathbb{P}^Q_t[\xi_T = 0] + d \; \mathbb{P}^Q_t[\xi_T = 1] \end{array} \right) \\
  &\phantom{=} + \tau e^{- r \tau} \left( \begin{array}{c} \frac{1}{1 - w^s} \mathbb{P}^Q_t[\xi = 1] \E^Q_t[ A_T | \xi_T = 1 ] \\
                                            \frac{1}{1 - w^d} \mathbb{P}^Q_t[\xi = 0] \E^Q_t[ A_T | \xi_T = 0 ] \end{array} \right) \\
  &= \tau e^{- r \tau} d \; \mathbb{P}^Q_t[\xi_T = 1]
    \left( \begin{array}{c} \frac{1 - w^d}{1 - w^s} \\ - 1 \end{array} \right) \\
  \frac{\partial}{\partial \tau} \left( \begin{array}{c} s_t \\ r_t \end{array} \right)
  &= - r e^{- r \tau} \left( \begin{array}{c} \left( \frac{1}{1 - w^s} \E^Q_t[ A_T | \xi_T = 1 ] - \frac{1 - w^d}{1 - w^s} d \right) \mathbb{P}^Q_t[\xi = 1] \\
  \frac{1}{1 - w^d} \E^Q_t[A_T | \xi_T = 0] \mathbb{P}^Q_t[\xi_T = 0] + d \; \mathbb{P}^Q_t[\xi_T = 1] \end{array} \right) \\
  &\phantom{=} + e^{- r \tau} \left( \begin{array}{c} \frac{1}{1 - w^s} \mathbb{P}^Q_t[\xi = 1] ( (r - \frac{1}{2} \sigma^2) \E^Q_t[ A_T | \xi_T = 1 ] - \frac{\sigma}{\sqrt{\tau}} \E^Q_t[ A_T Z | \xi_T = 1 ]) \\
                           \frac{1}{1 - w^d} \mathbb{P}^Q_t[\xi = 0] ( (r - \frac{1}{2} \sigma^2) \E^Q_t[ A_T | \xi_T = 0 ] - \frac{\sigma}{\sqrt{\tau}} \E^Q_t[ A_T Z | \xi_T = 0 ]) \end{array} \right)
\end{align}
All terms in these formulas are analytic except for the conditional
expectations $\E^Q_t[ A_T Z | \xi_T ]$ involving $A_T$ and $Z$
jointly, which we leave as an exercise to the reader. Instead, as a
sanity check, we compare the two Greeks, namely $\Delta$ and $\rho$
not involving this term with the corresponding results derived from
Black-Scholes formula.
\begin{align}
  \frac{\partial}{\partial a_t} \left( \begin{array}{c} s_t \\ r_t \end{array} \right)
  &= \left( \begin{array}{c} \frac{1}{1 - w^s} \Phi(d_{+}) \\
              \frac{1}{1 - w^d} \Phi(- d_{+}) \end{array} \right) \\
  &= \left( \begin{array}{c} \frac{1}{1 - w^s} \frac{\partial}{\partial a_t} C_{BS}(a_t, (1 - w^d) d, r, \tau, \sigma) \\
              \frac{1}{1 - w^d} \frac{\partial}{\partial a_t} \left( e^{-r \tau} (1 - w^d) d - P_{BS}(a_t, (1 - w^d) d, r, \tau, \sigma) \right) \end{array} \right) \\
  \frac{\partial}{\partial r} \left( \begin{array}{c} s_t \\ r_t \end{array} \right)
  &= \tau e^{- r \tau} (1 - w^d) d \; \Phi(d_{-})
    \left( \begin{array}{c} \frac{1}{1 - w^s} \\ - \frac{1}{1 - w^d} \end{array} \right) \\
  &= \left( \begin{array}{c} \frac{1}{1 - w^s} \frac{\partial}{\partial r} C_{BS}(a_t, (1 - w^d) d, r, \tau, \sigma) \\ \frac{1}{1 - w^d} \frac{\partial}{\partial r} \left( e^{-r \tau} (1 - w^d) d - P_{BS}(a_t, (1 - w^d) d, r, \tau, \sigma) \right) \end{array} \right)
\end{align}

\bibliographystyle{ormsv080}
\bibliography{arxiv} 

\begin{thebibliography}{26}
\expandafter\ifx\csname natexlab\endcsname\relax\def\natexlab#1{#1}\fi
\expandafter\ifx\csname url\endcsname\relax
  \def\url#1{{\tt #1}}\fi
\expandafter\ifx\csname urlprefix\endcsname\relax\def\urlprefix{URL }\fi
\expandafter\ifx\csname urlstyle\endcsname\relax
  \expandafter\ifx\csname doi\endcsname\relax
  \def\doi#1{doi:\discretionary{}{}{}#1}\fi \else
  \expandafter\ifx\csname doi\endcsname\relax
  \def\doi{doi:\discretionary{}{}{}\begingroup \urlstyle{rm}\Url}\fi \fi

\bibitem[{Acharya et~al.(2010)Acharya, Pedersen, Philippon, and
  Richardson}]{Acharya2010}
Acharya, Viral~V., Lasse~Heje Pedersen, Thomas Philippon, Matthew~P.
  Richardson. 2010.
\newblock Measuring systemic risk.
\newblock AFA 2011 Denver Meetings Paper.

\bibitem[{Adrian and Brunnermeier(2016)}]{Adrian2016}
Adrian, Tobias, Markus~K. Brunnermeier. 2016.
\newblock {CoVaR}.
\newblock {\it American Economic Review\/} {\bf 106}(7) 1705--41.

\bibitem[{Barucca et~al.(2016)Barucca, Bardoscia, Caccioli, D'Errico, Visentin,
  Battiston, and Caldarelli}]{Barucca2016}
Barucca, Paolo, Marco Bardoscia, Fabio Caccioli, Marco D'Errico, Gabriele
  Visentin, Stefano Battiston, Guido Caldarelli. 2016.
\newblock Network valuation in financial systems.

\bibitem[{Battiston et~al.(2012)Battiston, Puliga, Kaushik, Tasca, and
  Caldarelli}]{Battiston2012}
Battiston, Stefano, Michelangelo Puliga, Rahul Kaushik, Paolo Tasca, Guido
  Caldarelli. 2012.
\newblock Debtrank: Too central to fail? financial networks, the fed and
  systemic risk.
\newblock {\it Scientific Reports\/} {\bf 2} 541.

\bibitem[{Brownlees and Engle(2016)}]{Brownlees2016}
Brownlees, Christian~T., Robert~F. Engle. 2016.
\newblock {SRISK}: A conditional capital shortfall measure of systemic risk.

\bibitem[{Demange(2018)}]{Demange2018}
Demange, Gabrielle. 2018.
\newblock Contagion in financial networks: A threat index.
\newblock {\it Management Science\/} {\bf 64}(2) 955--970.

\bibitem[{Eisenberg and Noe(2001)}]{Eisenberg2001}
Eisenberg, Larry, Thomas~H. Noe. 2001.
\newblock Systemic risk in financial systems.
\newblock {\it Management Science\/} {\bf 47}(2) 236--249.

\bibitem[{Elsinger(2009)}]{Elsinger2009}
Elsinger, H. 2009.
\newblock Financial networks, cross holdings, and limited liability.
\newblock Tech. rep., Oesterreichische Nationalbank, Wien.
\newblock Working Paper 156.

\bibitem[{Feinstein et~al.(2017)Feinstein, Pang, Rudloff, Schaanning, Sturm,
  and Wildman}]{Feinstein2017}
Feinstein, Zachary, Weijie Pang, Birgit Rudloff, Eric Schaanning, Stephan
  Sturm, Mackenzie Wildman. 2017.
\newblock Sensitivity of the eisenberg-noe clearing vector to individual
  interbank liabilities.
\newblock Tech. Rep.~13, {Norges Bank} Working Paper.

\bibitem[{Fischer(2014)}]{Fischer2014}
Fischer, Tom. 2014.
\newblock No-arbitrage pricing under systemic risk: Accounting for
  cross-ownership.
\newblock {\it Mathematical Finance\/} {\bf 24}(1) 97--124.

\bibitem[{Gai and Kapadia(2010)}]{Gai2010}
Gai, Prasanna, Sujit Kapadia. 2010.
\newblock Contagion in financial networks.
\newblock {\it Proceedings of the Royal Society of London A: Mathematical,
  Physical and Engineering Sciences\/} {\bf 466}(2120) 2401--2423.

\bibitem[{{Hain} and {Fischer}(2015)}]{Hain2015}
{Hain}, J., T.~{Fischer}. 2015.
\newblock {Valuation Algorithms for Structural Models of Financial
  Interconnectedness}.
\newblock {\it ArXiv e-prints\/} .

\bibitem[{Halkin(1974)}]{Halkin1974}
Halkin, Hubert. 1974.
\newblock Implicit functions and optimization problems without continuous
  differentiability of the data.
\newblock {\it SIAM J. Control\/} {\bf 12}(2) 229--236.

\bibitem[{Haug(2003{\natexlab{a}})}]{Haug2003a}
Haug, Espen~Gaarder. 2003{\natexlab{a}}.
\newblock Know your weapon, part 1.
\newblock {\it Wilmott Magazine\/}  49–--57.

\bibitem[{Haug(2003{\natexlab{b}})}]{Haug2003b}
Haug, Espen~Gaarder. 2003{\natexlab{b}}.
\newblock Know your weapon, part 2.
\newblock {\it Wilmott Magazine\/}  43–57.

\bibitem[{He and Xiong(2012)}]{He2012}
He, Zhiguo, Wei Xiong. 2012.
\newblock Rollover risk and credit risk.
\newblock {\it The Journal of Finance\/} {\bf 67}(2) 391--430.

\bibitem[{{Idel}(2016)}]{Idel2016}
{Idel}, M. 2016.
\newblock {A review of matrix scaling and Sinkhorn's normal form for matrices
  and positive maps}.
\newblock {\it ArXiv e-prints\/} .

\bibitem[{Karl(2015)}]{Karl2015}
Karl, Sabine. 2015.
\newblock Firm values and systemic stability in financial networks.
\newblock doctoralthesis, Universit{\"a}t W{\"u}rzburg.

\bibitem[{Leland(1994)}]{Leland1994}
Leland, Hayne~E. 1994.
\newblock Corporate debt value, bond covenants, and optimal capital structure.
\newblock {\it The Journal of Finance\/} {\bf 49}(4) 1213--1252.

\bibitem[{Liu and Staum(2010)}]{Liu2010}
Liu, Ming, Jeremy Staum. 2010.
\newblock Sensitivity analysis of the eisenberg–noe model of contagion.
\newblock {\it Operations Research Letters\/} {\bf 38}(5) 489 -- 491.

\bibitem[{Merton(1974)}]{Merton1974}
Merton, Robert~C. 1974.
\newblock On the pricing of corporate debt: The risk structure of interest
  rates.
\newblock {\it The Journal of Finance\/} {\bf 29}(2) 449--470.

\bibitem[{Ota(2014)}]{Ota2014}
Ota, Tomohiro. 2014.
\newblock Marginal contagion: a new approach to systemic credit risk.
\newblock Bank of England, working paper.

\bibitem[{Reiffers-Masson et~al.(2015)Reiffers-Masson, Altman, and
  Hayel}]{Reiffersmasson2015}
Reiffers-Masson, Alexandre, Eitan Altman, Yezekael Hayel. 2015.
\newblock Controlling the {K}atz-{B}onacich centrality in social network:
  Application to gossip in online social networks.
\newblock {\it {3rd International Workshop on Big Data and Social Networking
  Management and Security (BDSN 2015)}\/}. Limassol, Cyprus.

\bibitem[{Sinkhorn and Knopp(1967)}]{Sinkhorn1967}
Sinkhorn, Richard~Dennis, Paul~Joseph Knopp. 1967.
\newblock Concerning nonnegative matrices and doubly stochastic matrices.
\newblock {\it Pacific Journal of Mathematics\/} {\bf 21}(2) 343--348.

\bibitem[{Suzuki(2002)}]{Suzuki2002}
Suzuki, Teruyoshi. 2002.
\newblock Valuing corporate debt: The effect of cross-holdings of stock and
  debt.
\newblock {\it Journal of the Operations Research Society of Japan\/} {\bf 2}.

\bibitem[{Upper(2011)}]{Upper2011}
Upper, Christian. 2011.
\newblock Simulation methods to assess the danger of contagion in interbank
  markets.
\newblock {\it Journal of Financial Stability\/} {\bf 7}(3) 111 -- 125.

\end{thebibliography}

\end{document}